\newcommand{\yifan}[1]{\textcolor{black}{#1}}
\newcommand{\yf}[1]{\textcolor{black}{#1}}
\newtheorem{remark}{Remark}%[section]
\newtheorem{definition}{Definition}[section]
\newtheorem{theorem}{Theorem}[section]
\newtheorem{lemma}{Lemma}[section] %[section]
\newtheorem{proposition}{Proposition}[section]
\newtheorem{corollary}{Corollary}[section]
\newcommand{\bdy}{{y}}
\newcommand{\bdu}{{u}}
\newcommand{\bdU}{{U}}
\newcommand{\bdY}{{Y}}
\newcommand{\bdG}{{G}}
\newcommand{\bdA}{{A}}
\newcommand{\bdtheta}{{{\theta}}}
\title{\yifan{Semiparametric fiducial inference for Cox models}}
\author{
Yifan Cui\thanks{Center for Data Science, Zhejiang University, China. Correspondence to \href{mailto:cuiyf@zju.edu.cn}{cuiyf@zju.edu.cn}}~
Jan Hannig\thanks{Department of Statistics and Operations Research, UNC-Chapel Hill, U.S.}~
Paul Edlefsen \thanks{Vaccine and Infectious Disease Division, Fred Hutch, U.S.}}
\date{}
\begin{document}

\maketitle

\abstract{R. A. Fisher introduced the fiducial distribution as a potential replacement for the Bayesian posterior distribution in the 1930s. During the past century, fiducial approaches have been explored in various parametric and nonparametric settings.
However, to the best of our knowledge, no fiducial inference has been developed in the realm of semiparametric statistics. 
In this paper, we propose a novel fiducial approach for semiparametric models.
\yf{In memory of Sir David Cox who passed away in 2022,} we use the Cox proportional hazards model, which is the most popular model for the analysis of survival data, as a running example. 
Other models and extensions are also discussed. 
In our experiments, we find that our method performs particularly well in situations where the maximum likelihood estimator fails.\\
}

\noindent {{\bf keywords:}
Bernstein–von Mises theorem, Conic optimization, Cox model, Fiducial inference, Gibbs sampler, Semiparametric model}

\section{Introduction}
\yifan{Fiducial inference was introduced by R. A. Fisher \citep{Fisher1930,Fisher1933} as a novel mode of statistical reasoning, distinct from both Bayesian and frequentist frameworks.} The fiducial distribution can be viewed as a potential replacement for the Bayesian posterior distribution
 in a data-driven sense which does not rely on a subjective prior selection. 

\yf{While in the second half of the 20th century fiducial distribution has been studied only sparingly, e.g., \cite{Wilkinson1977, dawid1982functional, Zabell1992, Barnard1993},}
there has been a fast-growing literature on fiducial methods and related approaches in the past two decades.  For example, \cite{WangYH2000, TaraldsenLindquist2013} showed how fiducial distributions naturally arise within a decision theory framework;
\cite{HannigLee2009} proposed fiducial solutions to wavelet regression and \cite{WandlerHannig2012b} addressed extreme value estimation from a fiducial perspective; \cite{hannig2009,hannig2016generalized} formalized the mathematical definition of generalized fiducial distribution. Their argument is based on inverting a data generating algorithm (DGA) that associates data to the parameters and a random component with a known distribution. The generalized fiducial distribution is then obtained by inverting the DGA for the parameter.
This formal definition facilitated the application of fiducial inference to a variety of important statistical problems such as linear mixed model \citep{CisewskiHannig2012}, ultrahigh-dimensional regression \citep{lai2015}, 
censored data and survival analysis \citep{chen2016generalized,cuihannig2019,cui2024unified}, 
model selection \citep{williams2019}, 
empirical Bayes estimation and $g$-modeling \citep{cui2024fiducial}, 
 vector autoregressive graph selection \citep{williams2023eas}, etc. 
as well as to other fields including psychology  
\citep{liu2016generalized,liu2017generalized,Liu2019,Neupert2019} and forensic science \citep{HANNIG2019572}.
We refer to \cite{murph2023introduction,dawid2024fiducial} for recent reviews of  fiducial inference.

Other related approaches include confidence distributions \citep{SinghXieStrawderman2005, XieSingh2013, claggett2014meta, schweder2016confidence, HjortSchweder2018}, Dempster-Shafer theory \citep{Dempster:1968vd,shafer1976mathematical,EdlefsenLiuDempster2009}, inferential models \citep{MartinLiu2013a,martin2015inferential,martin2015marginal,liu2020inferential}, objective Bayesian inference \citep{BergerBernardoSun2009, BergerBernardoSun2012}, repro methods \citep{xie2022repro,wang2022finite}, and structural inference \citep{DawidStoneZidek1973,Fraser1966}.
We refer to \cite{cui2025demystifying} for the connections between inferential models, confidence curves, and fiducial inference, and  \cite{cui2023confidence} for the connection between confidence distribution and fiducial inference.

While tremendous progress has been made in the area of the foundation of statistics in the past decades, there are few papers on semiparametric models \yifan{within the context of fiducial inference}. A semiparametric model is a statistical model that has parametric and nonparametric components \citep{bickel1993efficient,tsiatis2006semiparametric,kosorok2008introduction}.
In this paper, we apply fiducial inference to such models and propose novel inferential tools for the finite-dimensional parameter of interest. 
To our knowledge, this is the first time fiducial inference has been systematically applied to semiparametric models. Specifically, we consider the celebrated Cox proportional hazards model \citep{cox1972regression} as a running example. 
We propose a novel Gibbs sampler to sample from the derived fiducial distribution using conic optimization. Upon obtaining fiducial samples, we use the samples to construct statistical inference. For example, the median of the samples is used as a point estimator, and appropriate quantiles are used to construct confidence intervals.

We establish an asymptotic theory that verifies the frequentist validity of the proposed
fiducial approach. First, we prove the consistency of the proposed fiducial point estimator.
Next, we establish a Bernstein–von Mises theorem for the fiducial distribution. As a consequence of the Bernstein–von Mises theorem, the proposed confidence intervals provide asymptotically correct coverage, and the
proposed fiducial estimator is first-order asymptotically equivalent to the maximum partial likelihood estimator.
It is noteworthy that the proposed point estimator works well in scenarios when the classic maximum partial likelihood estimator fails.

The remainder of the article is organized as follows. In Section~\ref{sec:method}, we take a new look at the Cox model from a data generating perspective and derive our generalized fiducial distribution for our parameter of interest. \yf{We use the Cox proportional hazards model both because of its foundational role in survival analysis and in memory of Sir David Cox.} We then propose a novel conic optimization-based Gibbs sampler to sample from the fiducial distribution. In Section~\ref{sec:theory}, we develop consistency and asymptotic normality for the proposed fiducial estimator. In Section~\ref{sec:simulation}, we demonstrate the superiority of our estimator compared to the maximum likelihood estimator through simulation studies. Section~\ref{sec:extension} provides several extensions of the proposed method to other semiparametric models. 
Section~\ref{sec:real} describes a real data application on modern HIV trials. The article concludes with a discussion of future work in Section~\ref{sec:discussion}. Additional results and proofs are provided in the Appendix and Supplementary Material.

\section{Methodology}\label{sec:method}

\subsection{The Cox proportional hazards model revisited}
We consider the Cox proportional hazards model which is the
most popular model \citep{cox1972regression} for the analysis of survival data. 
Suppose we observe right-censored survival data $(X_i,Y_i,\Delta_i)$, $i=1,\ldots,n$, where $Y_i=\min\left \{T_i, C_i\right \}$, censoring indicator $\Delta_i=I\left\{T_i\leqslant C_i\right\}$, and $T_i$, $C_i$, $X_i$ represent the failure time, the censoring time, and explanatory covariates for the $i$-th subject, respectively. We assume a noninformative censoring mechanism here, i.e., $T_i$ and $C_i$ are independent given $X_i$. \yf{We direct the reader to Chapter 11 of \cite{james2013introduction} for an extensive overview of concepts in survival analysis.}

\yf{Let $S_i(t)= P(T_i > t)$ and $\lambda_i(t)=\lim_{\Delta t \to 0} {P(t \leq T_i < t+\Delta t \mid T_i \geq t)}/{\Delta t}$} denote the survival function and hazard function of the $i$-th subject, respectively, and $\mathcal R_i$ denote the at-risk index set at the time $Y_i$, i.e., $\mathcal R_i=\{j: Y_j\geq Y_i\}$. Note that the full likelihood function is
\begin{align*}
  \prod_{i=1}^n \left[ \frac{\lambda_i(Y_i)}{\sum_{j\in \mathcal R_i} \lambda_j(Y_i)} \right]^{\Delta_i}  \left[\sum_{j\in \mathcal R_i} \lambda_j(Y_i)\right]^{\Delta_i} S_i(Y_i).
\end{align*}

The Cox proportional hazards model posits the following form of hazard function:
\begin{align*}
\lambda_i(t)= \lambda_0(t) g(\beta^\top X_i),
\end{align*}
where $\lambda_0(t)$ is a baseline hazard function, and $g(\beta^\top X)$ is a link function. We first consider the most prevalent case $g(\beta^\top X)=\exp(\beta^\top X)$. This leads to the following likelihood function
\begin{equation}\label{eq:CoxCompleteL}
  L(\beta,\lambda_0)= \prod_{i=1}^n \left[ \frac{\exp(\beta^\top X_i)}{\sum_{j\in \mathcal R_i}\exp(\beta^\top X_j)} \right]^{\Delta_i}   
\left[\lambda_0(Y_i){\sum_{j\in \mathcal R_i}\exp(\beta^\top X_j)}\right]^{\Delta_i} S_i(Y_i),
\end{equation}
where $S_i(t)=\exp(-\Lambda_0(t)\exp(\beta^\top X_i))$ and $\Lambda_0(t)=\int_0^t \lambda_0(s)ds$.

We are interested in statistical inference on the log hazard ratio $\beta$. Under the Cox proportional hazards model,
the well-known Cox's partial likelihood \citep{cox1975partial} is given below
\begin{equation}\label{eq:CoxPartialL}
 \yifan{L_n(\beta)} = \prod_{i=1}^n \left[ \frac{\exp(\beta^\top X_i)}{\sum_{j\in \mathcal R_i}\exp(\beta^\top X_j)} \right]^{\Delta_i}.
\end{equation}
\yifan{In the Cox proportional hazards model, there is no parametric assumption on the baseline hazard $\lambda_0(t)$ and the inference about $\beta$ relies only on the relative hazard $\exp(\beta^\top X)$. It was the brilliant insight of \cite{cox1975partial} that one can use only the partial likelihood $L_n(\beta)$ ignoring portions of the full likelihood without much loss of efficiency in inference on $\beta$, as long as there are no \yf{smoothness} assumptions on $\lambda_0(t)$.} Maximizing the log partial likelihood can be achieved via the Newton-Raphson algorithm, and the inverse of the Hessian matrix evaluated at the maximum likelihood estimator $\tilde \beta$ can be used for constructing confidence intervals. The consistency and asymptotic normality of $\tilde \beta$ can be established \citep{andersen1982cox,lin1989robust,fleming2013counting}. However, despite the popularity of the Cox model and partial likelihood approach, \yf{the maximum likelihood estimator sometimes does not converge} when the sample size is relatively small as seen in our simulation section. In comparison, we will see that the proposed fiducial-based estimation works well.

\subsection{Generalized fiducial inference revisited}\label{s:FiducialDef}

Generalized fiducial inference begins with a DGA that relates the data $\bdY$ to the parameter of interest $\bdtheta$:
\begin{equation} \label{data-generating}
  \bdY=\bdG_{\bdtheta}(\bdU),
\end{equation}
where $\bdG_{\bdtheta}$ is a deterministic function, and $\bdU$ is an auxiliary random variable whose distribution is completely known and does not depend on the unknown parameter $\bdtheta$. As in the definition of the likelihood function, generalized fiducial inference proceeds by switching the roles of the data and the parameter: once the data $\bdy$ are observed, \eqref{data-generating} is viewed as an equation to be inverted for $\bdtheta$. Accordingly, we define
\begin{equation}\label{data-generating-inv}
Q_{\bdy}(\bdu)
=
\arg\min_{\bdtheta\in\Theta}
\rho\!\left(\bdG_{\bdtheta}(\bdu),\bdy\right),
\end{equation}
where $\rho$ is a smooth semi-metric on the sample space. Here, $Q_{\bdy}(\bdu)$ should be understood as a selected minimizer when the argmin set contains more than one element.

Ideally, the minimum in \eqref{data-generating-inv} is equal to zero, in which case $Q_{\bdy}(\bdu)$ provides an exact inverse of the DGA. In general, two difficulties may arise. First, for a given pair $(\bdy,\bdu)$, the minimizer in \eqref{data-generating-inv} may not be unique. This can be handled either through imprecise-probability methods \citep{Dempster2008} or, in standard applications, by selecting one solution according to a fixed or random rule; see \cite{Hannig2013}. Second, there may be no $\bdtheta$ such that $\bdy=\bdG_{\bdtheta}(\bdu)$. To describe the auxiliary values for which an exact solution exists, define
\begin{equation}\label{U0}
\mathcal U_{\bdy,0}
=
\left\{
\bdu:
\bdy=\bdG_{\bdtheta}(\bdu)
\text{ for some }\bdtheta\in\Theta
\right\}.
\end{equation}
If $P(\bdU\in\mathcal U_{\bdy,0})>0$, then one may sample $\bdU$ from its distribution truncated to $\mathcal U_{\bdy,0}$ and map the resulting draws through $Q_{\bdy}$. For continuous models, however, $P(\bdU\in\mathcal U_{\bdy,0})=0$ typically, so conditioning directly on $\mathcal U_{\bdy,0}$ is ill-defined because of the Borel paradox \citep{CasellaBerger2002}.

To address this issue, we enlarge the conditioning event by introducing a small tolerance $\varepsilon$:
\begin{equation}\label{Ueps}
\mathcal U_{\bdy,\varepsilon}
=
\left\{
\bdu:
\min_{\bdtheta\in\Theta}
\rho\!\left(\bdG_{\bdtheta}(\bdu),\bdy\right)
\le \varepsilon
\right\}.
\end{equation}
Let $\bdU_\varepsilon^*$ denote a random variable distributed as $\bdU$ truncated to $\mathcal U_{\bdy,\varepsilon}$, and let $P^*_\varepsilon$ be the distribution of $Q_{\bdy}(\bdU_\varepsilon^*)$.

\begin{definition}\label{def:limitGFI}
A probability measure $P^*$ on the parameter space $\Theta$ is called a generalized fiducial distribution (GFD) if
$
P^*_\varepsilon
$ converges weakly to $P^*$
as $\varepsilon\to0$. 
\end{definition}
For discrete models, $P(\bdU\in\mathcal U_{\bdy,0})>0$, so $P^*$ is a distribution of any element selected from the random set $Q_{\bdy}(\bdU_0^*)$. For continuous models satisfying suitable regularity conditions, \cite{hannig2016generalized} showed that the limiting GFD in Definition~\ref{def:limitGFI} has density
\begin{equation}\label{eqn:GFD}
r(\bdtheta)
=
\frac{
f(\bdy;\bdtheta)\,J(\bdy,\bdtheta)
}{
\int_{\Theta} f(\bdy;\bdtheta')\,J(\bdy,\bdtheta')\,d\bdtheta'
},
\end{equation}
where $f(\bdy;\bdtheta)$ is the likelihood function and
\begin{equation}\label{eq:ContJacobian}
J(\bdy,\bdtheta)
=
D\!\left(
\nabla_{\bdtheta}\bdG_{\bdtheta}(\bdu)
\bigg|_{\bdu=\bdG_{\bdtheta}^{-1}(\bdy)}
\right).
\end{equation}
Here $\bdG_{\bdtheta}^{-1}(\bdy)$ denotes the value of $\bdu$ satisfying $\bdy=\bdG_{\bdtheta}(\bdu)$, and $D(\cdot)$ depends on the choice of $\rho$. When $\rho(\bdy,\bdy^*)=\|\bdy-\bdy^*\|_2^2$, then
$
D(\bdA)=\left(\det(\bdA^\top\bdA)\right)^{1/2},
$
which is the product of the singular values of $\bdA$.

{In what follows, we will be using stars, e.g., $\beta^*$, to denote random variables with respect to the GFD.}

\subsection{A data generating perspective for the Cox model}
\label{s:basicDGA}

%The generalized fiducial inference approach \citep{hannig2016generalized} often has good finite sample performance, especially when the sample sizes of datasets are small. The generalized fiducial inference usually starts with a DGA, and 
We now look at the Cox model from the following novel data generating perspective. \yf{We describe a particular DGA \eqref{data-generating} that can be used to generate realizations of $(Y_i,\Delta_i)$ following the full likelihood \eqref{eq:CoxCompleteL}. In the next section, we derive the corresponding GFD of $(\beta,\lambda_0)$, with particular emphasis on the marginal distribution of $\beta$, and observe that this marginal fiducial distribution is related to Cox's partial likelihood approach.}

\yf{The input of our DGA is the potential censoring times, $c_1,\ldots,c_n$, that are generated from a separate censoring mechanism. The algorithm generates the failure times $t_1,\ldots,t_m$ and indicators of which subjects failed at which failure times $i_1,\ldots,i_m$; here and in what follows $m$ denotes the number of failures. These then are relabeled and combined into the data $(X_i,Y_i,\Delta_i)$, $i=1,\ldots,n$ actually observed by the investigator.} 

We first model the first failure. Set
$\mathcal R_{1}(t) = \{1,\ldots,n\}\setminus\{i : c_i<t\}$. The hazard that any subject fails is the sum of hazards of all at-risk subjects $\bar\lambda
_1(t) = \sum_{j\in \mathcal R_1(t)}\lambda_j(t) =\lambda_0(t) \sum_{j\in \mathcal R_1(t)} \exp(\beta^\top X_j)$. The corresponding $\bar\Lambda_1(t)=\int_{0}^{t} \bar\lambda
_1(s)\,ds$ and $\bar S_1(t)=\exp(-\bar\Lambda_1(t))$. 
We generate the time of the first failure by 
$t_1=\bar S_1^{-1}(W_1)$, where $W_1$ is generated from a uniform distribution on $(0,1)$. 
The subject that failed first is generated as $i_1\sim \mbox{Multinomial}(1,\vec{q}_1)$, where
\begin{equation*}
    \vec{q}_1 = [q_{1,1}(\beta),\ldots,q_{1,n}(\beta)]^\top,
    \mbox{ and }
    q_{1,i}(\beta)=\begin{cases}
       \frac{\exp(\beta^\top X_i)}{\sum_{j\in \mathcal R_1(t_1)} \exp(\beta^\top X_j)}
       & \mbox{ if $i\in \mathcal R_{i_1}= \mathcal R_1(t_1)$};\\
       0 & \mbox{ otherwise}.
    \end{cases}
\end{equation*}
Because $t_1$ is the smallest failure time, any subject $i$ with censoring time $c_i<t_1$ is a censored observation with censoring time $c_i$. 

We continue generating subsequent failures by repeating the above step.
At the time of the $(k-1)$-th failure, we have $t_{k-1}$ and $i_1,\ldots, i_{k-1}$. The at-risk set after the $k-1$-th failure is
\begin{equation}\label{eq:atRiskSet}
\mathcal R_{k}(t) = \{1,\ldots,n\}\setminus\left(\{i: c_i<t\}\cup \{i_1,\ldots, i_{k-1}\}\right).
\end{equation}
Define $\bar\lambda
_k(t) =\sum_{j\in \mathcal R_k(t)} \lambda_j(t)= \lambda_0(t) \sum_{j\in \mathcal R_k(t)} \exp(\beta^\top X_j)$,  $\bar\Lambda_k(t)=\int_{t_{k-1}}^{t\vee t_{k-1}} \bar \lambda
_k(s)\,ds$, and $\bar S_k(t)=\exp(-\bar \Lambda_k(t))$. 
We generate the time of the $k$-th failure by 
$t_k=\bar S_k^{-1}(W_k)$, where $W_k$ is generated from the $\text{Uniform}(0,1)$ distribution. 
The $k$-th subject that failed is generated as $i_k\sim \mbox{Multinomial}(1,\vec{q}_k)$, where
\begin{equation}\label{eq:Qkl}
    \vec{q}_k = [q_{k,1}(\beta),\ldots,q_{k,n}(\beta)]^\top,
    \mbox{ and }
    q_{k,i}(\beta)=\begin{cases}
       \frac{\exp(\beta^\top X_i)}{\sum_{j\in \mathcal R_k(t_k)} \exp(\beta^\top X_j)}
       & \mbox{ if $i\in \mathcal R_{i_k}= \mathcal R_k(t_k)$};\\
       0 & \mbox{ otherwise}.
    \end{cases}
\end{equation}
Again subjects $i\in\mathcal R_{k}(t_{k-1})=\mathcal R_{k-1}\setminus \{i_{k-1}\}$ with censoring time $c_i<t_k$ are censored observations with censoring time $c_i$.
This is repeated until either  the generated failure time $t_k=\infty$ 
(the last observation is censored) or 
${\mathcal R}_{k+1}(t_{k})=\emptyset$ (the last observation is a failure). 

The above procedure completes the data generation given the censoring times.
As shown in the following proposition, the above data generating mechanism produces data from the Cox proportional hazards model.
\begin{proposition}\label{prop:compatible}
\yf{Data  $(X_i,Y_i,\Delta_i)$, $i=1,\ldots,n$ generated by the above DGA have the joint distribution corresponding to likelihood \eqref{eq:CoxCompleteL}.}
\end{proposition}

\begin{remark}
Technically, the likelihood \eqref{eq:CoxCompleteL} does not allow for ties. However, ties can be introduced by allowing the cumulative hazard function $\Lambda_0(t)$ to have jumps and changing the likelihood accordingly. The details are shown in Appendix~\ref{s:generalDGA}, where we also propose an approximation to the DGA that leads to fiducial distribution which is analogous to Breslow partial likelihood \citep{peto1972contribution,breslow1974covariance}.

\end{remark}

% \yifan{Ties: 
% Suppose 1 \& 2 dies, and 3 is either a failure/censor\\
% * Average PL: 1/2 weight for 1 dies first, 1/2 weight for 2 dies first\\
% * Breslow: when 1 dies 2 is in the at-risk set and vice versa \\
% * Efron: the 1st event time: the at-risk denominator is $r_1+r_2+r_3$; the 2nd event time: the at-risk denominator is $(r_1+r_2)/2+r_3$\\
% * Exact: Assume discrete time, i.e., the events truly happen at the same time. The PL is like $\prod_{i}\frac{r_{i_1}\ldots r_{i_k}}{\text{sum over all possible draws, i.e.,}\sum r \ldots r}$
% }

%Next, we will see how we utilize this DGA to conduct our fiducial inversion.

\subsection{Fiducial inversion for the Cox model}
\label{s:DGAinversion}
In this subsection, we derive the GFD of the Cox proportional hazards model by inverting the DGA from the previous section. Recall that we denote the number of failures by $m$ and assume $m>0$. Note that the multinomial variables $i_k, k=1,\ldots, m,$ depend only on the parameter $\beta$, while the failure times $t_k$ depend on both $\beta$ and $\lambda_0(t)$. \yf{Consequently, we can find the inverse \eqref{data-generating-inv} by first inverting the multinomial part of the DGA into $\beta$ and then plugging the solution into the time part of the DGA and solving for $\lambda_0(t)$. The second part of this inversion will not modify the fiducial distribution of $\beta$ derived from the multinomial distribution, because there are no smoothness assumptions on $\lambda_0(t)$ and the solution of this second inversion always exists. Thus,} here we describe the former inversion while the latter is in Appendix~\ref{s:BaselineHazard}.

The GFD for $\beta$ is obtained by inversion of the multinomial distributions \eqref{eq:Qkl}. To this end, we follow \cite{lawrence2024new} as also explained in \cite{hannig2016generalized}. 
% let $U_k^*$ follow Dirichlet$(\alpha_k)$ distribution, where $\alpha_k=[1,\alpha_{k,1}\ldots,\alpha_{k,n}]^\top$, where $\alpha_{k,i}=1$ if subject $i$ fails at the time of $k$-th failure and 0 otherwise. 
% The fiducial distribution for $\beta$ is based on solving the inequalities, 
% \begin{equation}\label{eq:LawrenceInv}
% U_{k,l+1}^*\leq q_{k,l}(\beta) \leq 1,\quad
% k=1,\ldots,m,\quad l\in\{i\in [n]\ :\,\alpha_{k,i}=1\}.
% \end{equation}
% conditional on that solutions exist, where $q_{k,l}$ is defined in \eqref{eq:Qkl}. The main difficulty is in sampling $U^*_k, k=1\ldots,m$ subject to the constraint that the solution to \eqref{eq:LawrenceInv} exists.
Since all $m_k=1$, i.e., all event times are distinct, the fiducial distribution 
for $\beta$ is based on solving the inequalities, 
\begin{equation}\label{eq:LawrenceInv2}
    U_k^*\leq \frac{\exp(\beta^\top X_{i_k})}{\sum_{j\in \mathcal R_{i_k}} \exp(\beta^\top X_j)}=q_{k,i_k}(\beta):=q_k(\beta), \quad k=1,\ldots,m,
\end{equation}
where $U_k^*$ are jointly uniform distribution on the set on which the solution to \eqref{eq:LawrenceInv2} exists, and
$i_k$ is the index of the subject who failed at the time of the $k$-th failure.

\yf{There is no closed form solution for the distribution of \eqref{eq:LawrenceInv2} and sampling $(U_1^*,\ldots,U_m^*)$ from the uniform distribution on the set on which solution to \eqref{eq:LawrenceInv2} exists is computationally challenging.} 
Therefore, we propose a Gibbs sampler for generating $U_k^*$ and $\beta_j^*$. Notice that for any $k=1,\ldots,m$, the distribution of $U_k^*$ given the rest is $\text{Uniform}(0,q_k^*)$, where the upper bound $q_k^*$ is found by the following constraint optimization problem:
\begin{align}\label{eq:qk*}
    q_k^*=\max_{\beta} \frac{\exp(\beta^\top X_{i_k})}{\sum_{j\in \mathcal R_{i_k}} \exp(\beta^\top X_j) }
\end{align}
subject to the following $(m-1)$ constraints, 
\begin{align*}
   U_h^*\leq q_h(\beta),\quad h\in\{1,\ldots,m\}\setminus\{k\}.
   % \\
   % \cdots,\\
   % \bar U_{k-1}^*\leq q_{k-1}(\beta), \\
   %    \bar U_{k+1}^*\leq q_{k+1}(\beta), \\
   %    \cdots,\\
   %  \bar U_{m}^*\leq q_{m}(\beta).
\end{align*}
Upon obtaining $q_k^*$, we then update $U_{k}^* \sim \text{Uniform}(0,q_k^*)$. 

After the $j$-th cycle of the Gibbs sampler is finished, we want to generate a representative of the set $Q(U^*)=\{\beta: \beta \mbox{ satisfying \eqref{eq:LawrenceInv2}}\}$.
To this end, generate $w$ from a standard normal distribution and then calculate $\beta_j^*$ by solving 
\begin{align}\label{eq:qk**}
\max_{\beta} \beta^\top w
\end{align}
subject to the following $m$ constraints,
\begin{equation}\label{eq:Qfeasable}
   U_h^*\leq q_h(\beta),\quad h\in\{1,\ldots,m\}.
   % \\
   % \cdots,\\
   % \bar U_{k-1}^*\leq q_{k-1}(\beta), \\
   %    \bar U_{k+1}^*\leq q_{k+1}(\beta), \\
   %    \cdots,\\
   %  \bar U_{m}^*\leq q_{m}(\beta).
\end{equation}
Note that if \eqref{eq:qk**} fails to converge due to unbounded constraints, one can use $-w$ instead of $w$. 
 After the $\beta_j^*$ have been generated, one can also generate a sample from the fiducial distribution of the baseline hazard. Details are given in Appendix~\ref{s:BaselineHazard}.

The proposed Gibbs sampler for the log hazard ratio $\beta$ is summarized in  Algorithm~\ref{alg:gibbs}. 
Once the fiducial sample is generated, 
we propose to use the median of $\beta_j^*$, $j=n_{\text{burn}}+1,\ldots,n_{\text{burn}}+n_{\text{mcmc}}$ as a point estimator, and the empirical 0.025 quantile as a lower limit and the empirical 0.975 quantile as an upper limit.

\begin{center} 
\begin{minipage}{\linewidth}
\begin{algorithm}[H] %\SetAlgoNoLine
\SetAlgoLined
\caption{A fiducial Gibbs sampler \label{alg:gibbs}}
\KwIn{Dataset  $(X_i, Y_i, \Delta_i)$,
      %~~~~~~~~~~ vector $\bu$ of length $n_{\text{grid}}$,\\
$n_{\text{mcmc}}$, $n_{\text{burn}}$.}
\ShowLn Use $q_1(\tilde \beta),\ldots,q_m(\tilde \beta)$ as initial values and generate $U_{k}^*$ by $\text{Uniform}(0,q_k(\tilde \beta))$, where $\tilde\beta$ is the MLE. If the MLE did not converge, set $\tilde\beta=0$.\\
\For{$j = 1$ \textbf{to} $n_{\text{burn}}+n_{\text{mcmc}}$}
{ 
\For{$k = 1$ \textbf{to} $m$}
{
\ShowLn Solve $q^*_k$ by~\eqref{eq:qk*} based on the current $U_{h}^* $, $h=1,\ldots, k-1, k+1 \ldots,m$\;
\ShowLn Update $U^*_k$ by a random sample $U_{k}^* \sim \text{Uniform}(0,q_k^*)$\;
}
\ShowLn Generate $\beta_j^*$ by~\eqref{eq:qk**} based on the current $U_{h}^* $, $h=1,\ldots,m$\;
}
\Return  The fiducial samples $\beta_j^*$,    $j=n_{\text{burn}}+1,\ldots,n_{\text{burn}}+n_{\text{mcmc}}$.
\end{algorithm}
\end{minipage}
\end{center}

\begin{remark}
\yf{Algorithm~\ref{alg:gibbs} samples from the marginal generalized fiducial distribution of $\beta$ induced by the joint fiducial construction for $(\beta,\lambda_0)$. The baseline hazard $\lambda_0(t)$ is eliminated through the structure of our DGA, rather than ignored heuristically. Notice that just like Cox's partial likelihood \eqref{eq:CoxPartialL}, the GFD of $\beta$ depends only on the multinomial portion of the likelihood \eqref{eq:CoxCompleteL}. Since this arises naturally from inverting a DGA and not an ad-hoc choice, it can be viewed as a fiducial justification for using the partial likelihood.}
\end{remark}

\begin{remark}
     When the failure times contain ties, the approximate DGA given in Appendix~\ref{s:generalDGA} leads to a fiducial distribution for $\beta$ that is the same as described above with the caveat that if some observations share the failure time $t_k$, they also share the at-risk sets $\mathcal R_k$. This is similar to the approximation of Peto-Breslow method \citep{peto1972contribution,breslow1974covariance}. In fact, the fiducial approach provides a new insight showing that Peto-Breslow approximation is achieved at the cost of Poisson approximation. This approximation works well if the jumps in $\Lambda_0$ are small, i.e., $\max_k m_k\ll m$.
\end{remark}

\begin{remark} \yf{As discussed in Section~\ref{s:FiducialDef},}
   \cite{hannig2016generalized} show that GFD for parametric continuous data usually follows a form of likelihood times a Jacobian. This form is not directly applicable for discrete distributions, because the inverse is a set instead of a point \citep{Hannig2013, liu2016generalized}. When $m\geq p$, the dimension of $\beta$, we derive a likelihood times Jacobian formula for our GFD of $\beta$ in \eqref{eq:JacobianForm} in Appendix~\ref{appB}. 
\end{remark}
\subsection{A conic optimization-based Gibbs sampler}

While the optimization problems in~\eqref{eq:qk*} and \eqref{eq:qk**} can be solved through a brute force search, the computation is usually costly when $\beta$ is multivariate.  
In this subsection, we propose a conic optimization approach to \eqref{eq:qk*} and \eqref{eq:qk**}. \yifan{Conic optimization provides a unifying framework for a wide range of convex optimization problems, including linear, quadratic, semidefinite, and certain nonlinear programs. By formulating the nonlinear constraints as membership in convex (exponential) cones, the conic optimization naturally accommodates our problems.}

\begin{theorem}\label{thm:cone1}
The optimization problem~\eqref{eq:qk*} is equivalent to the following optimization problem, 
\begin{align*}
\min_{\beta,s_1,\ldots,s_m,t_{j,l}} -(\beta^\top X_{i_k} -s_k)\quad &\\
\text{subject to}~~~~
\sum_{j\in \mathcal R_{i_l}} t_{j,l} \leq 1, \quad & l=1,\cdots,m\\
(t_{j,l}, 1, \beta^\top X_j  - s_l) \in K_{exp},\quad  j\in \mathcal R_{i_l},\quad & l=1,\ldots,m\\
\beta^\top X_{i_h} - s_h\geq \log(U^*_h),\quad  h\neq k,\quad & h=1,\ldots,m
\end{align*}
where $K_{exp}$ is an exponential cone defined as
\begin{align*}
K_{exp} := \{ (x_0,x_1,x_2) \in \mathbb R^3: x_0 \geq x_1 \exp(x_2/x_1),\ x_0,x_1\geq 0 \}.
\end{align*}
\end{theorem}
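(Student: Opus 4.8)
The plan is to reduce the problem in~\eqref{eq:qk*} to the stated conic form by three standard convex‑analytic moves: replace the objective by its logarithm, introduce epigraph variables for the log‑sum‑exp terms, and rewrite the resulting exponential inequalities as membership in $K_{exp}$. The only substantive work is to check that none of these moves changes the set of feasible, and hence optimal, values of $\beta$; the values of the auxiliary variables $s_l$ and $t_{j,l}$ are immaterial to the conclusion.

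First, since $t\mapsto\log t$ is strictly increasing on $(0,\infty)$ and $q_k(\beta)>0$ for every $\beta$, maximizing $q_k(\beta)$ is equivalent to maximizing $\log q_k(\beta)=\beta^\top X_{i_k}-\log\bigl(\sum_{j\in\mathcal R_{i_k}}\exp(\beta^\top X_j)\bigr)$ over the same constraint set, with $q_k^*=\exp\bigl(\max\log q_k(\beta)\bigr)$; likewise each constraint $U_h^*\le q_h(\beta)$ becomes $\log U_h^*\le \beta^\top X_{i_h}-\log\bigl(\sum_{j\in\mathcal R_{i_h}}\exp(\beta^\top X_j)\bigr)$. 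Next, for each $l=1,\dots,m$ introduce a scalar $s_l$ with $s_l\ge\log\bigl(\sum_{j\in\mathcal R_{i_l}}\exp(\beta^\top X_j)\bigr)$; exponentiating, this is exactly $\sum_{j\in\mathcal R_{i_l}}\exp(\beta^\top X_j-s_l)\le 1$. Introducing $t_{j,l}\ge\exp(\beta^\top X_j-s_l)$ for $j\in\mathcal R_{i_l}$, the last inequality follows from $\sum_{j\in\mathcal R_{i_l}}t_{j,l}\le 1$, and at an optimum one may take $t_{j,l}=\exp(\beta^\top X_j-s_l)$ so that the two conditions coincide. Finally, putting $x_1=1$ in the definition of $K_{exp}$ gives $x_0\ge\exp(x_2)$, so $t_{j,l}\ge\exp(\beta^\top X_j-s_l)$ is precisely $(t_{j,l},1,\beta^\top X_j-s_l)\in K_{exp}$; since $x_1=1>0$, no boundary subtlety of the cone arises.

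It remains to assemble the equivalence. For the objective, the block of constraints indexed by $k$ imposes only $s_k\ge\log\bigl(\sum_{j\in\mathcal R_{i_k}}\exp(\beta^\top X_j)\bigr)$; since minimizing $-(\beta^\top X_{i_k}-s_k)$ drives $s_k$ down to this lower bound, the optimal value of the conic program equals $-\max_\beta\bigl(\beta^\top X_{i_k}-\log\sum_{j\in\mathcal R_{i_k}}\exp(\beta^\top X_j)\bigr)=-\log q_k^*$, attained at the same $\beta$. For the constraints, fix $h\ne k$: a tuple $(\beta,s_h,\{t_{j,h}\}_j)$ is feasible in the block indexed by $h$ if and only if $\log\bigl(\sum_{j\in\mathcal R_{i_h}}\exp(\beta^\top X_j)\bigr)\le s_h\le \beta^\top X_{i_h}-\log U_h^*$, and such an $s_h$ exists if and only if $\log U_h^*\le \beta^\top X_{i_h}-\log\bigl(\sum_{j\in\mathcal R_{i_h}}\exp(\beta^\top X_j)\bigr)$, i.e.\ if and only if $U_h^*\le q_h(\beta)$. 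Hence the projection of the conic feasible region onto the $\beta$‑coordinates is exactly the feasible region of~\eqref{eq:qk*}, and the two programs share optimizers and have optimal values related by exponentiation.

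The main obstacle, and essentially the only point beyond bookkeeping, is the two‑sided feasibility check in the previous paragraph: one must confirm that the auxiliary variables $s_l$ and $t_{j,l}$ neither enlarge nor shrink the admissible set of $\beta$, which relies on the log‑sum‑exp epigraph constraint being tight at the optimum for the block $l=k$ and being freely tightenable for the blocks $l\ne k$.
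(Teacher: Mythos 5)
Your proposal is correct and follows essentially the same route as the paper's proof: take logarithms, introduce epigraph variables $s_l$ for the log-sum-exp terms, split them into $t_{j,l}$ with $\sum_{j\in\mathcal R_{i_l}} t_{j,l}\le 1$, and recognize $t_{j,l}\ge\exp(\beta^\top X_j-s_l)$ as $(t_{j,l},1,\beta^\top X_j-s_l)\in K_{exp}$. Your explicit two-sided check that the auxiliary variables neither enlarge nor shrink the projection onto the $\beta$-coordinates (and that $s_k$ is driven to its lower bound at the optimum) is a welcome bit of rigor that the paper's chain of ``equivalent'' reformulations leaves implicit.
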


\begin{theorem}\label{thm:cone2}
The optimization problem~\eqref{eq:qk**} is equivalent to the following optimization problem, 
\begin{align*}
\min_{\beta,s_1,\ldots,s_m,t_{j,l}} -\beta^\top w \quad &\\
\text{subject to}~~~ \sum_{j\in \mathcal R_{i_l}} t_{j,l} \leq 1, \quad & l=1,\cdots,m\\
(t_{j,l}, 1, \beta^\top X_j  - s_l) \in K_{exp},\quad  j\in \mathcal R_{i_l},\quad & l=1,\ldots,m\\
\beta^\top X_{i_h} - s_h\geq \log(U_h^*),\quad & h=1,\ldots,m.
\end{align*}
\end{theorem}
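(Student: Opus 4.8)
The plan is to follow the same template as the proof of Theorem~\ref{thm:cone1}, the only differences being that the objective is now the plain linear functional $\beta^\top w$ (so none of the subtlety about promoting $\beta^\top X_{i_k}-s_k$ to the objective is needed) and that all $m$ of the multinomial constraints \eqref{eq:Qfeasable} enter as constraints rather than one of them being optimized.

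First I would take logarithms in \eqref{eq:Qfeasable}. Since $q_h(\beta)>0$ for every $\beta$, the constraint $U_h^*\le q_h(\beta)$ is equivalent to
\[
\log U_h^* \le \beta^\top X_{i_h} - \log\Bigl(\textstyle\sum_{j\in\mathcal R_{i_h}}\exp(\beta^\top X_j)\Bigr),\qquad h=1,\ldots,m.
\]
Next I would linearize the log-sum-exp term by introducing for each $l=1,\ldots,m$ an epigraph variable $s_l$ with $s_l\ge \log(\sum_{j\in\mathcal R_{i_l}}\exp(\beta^\top X_j))$, which is in turn equivalent to $\sum_{j\in\mathcal R_{i_l}}\exp(\beta^\top X_j-s_l)\le 1$. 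Finally I would split each summand by a variable $t_{j,l}\ge\exp(\beta^\top X_j-s_l)$, and note that, by the definition of $K_{exp}$, the inequality $t_{j,l}\ge\exp(\beta^\top X_j-s_l)$ is exactly the membership $(t_{j,l},1,\beta^\top X_j-s_l)\in K_{exp}$. Collecting these gives precisely the constraint system in the statement, with $\beta^\top X_{i_h}-s_h\ge\log U_h^*$ in place of the logarithmic inequality above.

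It then remains to verify that the two programs have the same feasible set of $\beta$'s and hence the same optimal value, which I would do by a feasible-point correspondence. In one direction, any $\beta$ feasible for \eqref{eq:Qfeasable} lifts to a feasible point of the conic program by setting $s_l=\log(\sum_{j\in\mathcal R_{i_l}}\exp(\beta^\top X_j))$ and $t_{j,l}=\exp(\beta^\top X_j-s_l)$; all cone and simplex constraints then hold with equality and $\beta^\top X_{i_h}-s_h=\log q_h(\beta)\ge\log U_h^*$. Conversely, given a feasible $(\beta,\{s_l\},\{t_{j,l}\})$ of the conic program, the chain $\sum_j\exp(\beta^\top X_j-s_l)\le\sum_j t_{j,l}\le 1$ forces $s_l\ge\log(\sum_j\exp(\beta^\top X_j))$, hence $\beta^\top X_{i_h}-s_h\le\log q_h(\beta)$, so that $\log U_h^*\le\beta^\top X_{i_h}-s_h\le\log q_h(\beta)$, i.e.\ $\beta$ is feasible for \eqref{eq:Qfeasable}. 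Since $\beta^\top w$ depends only on $\beta$ and maximizing $\beta^\top w$ equals minimizing $-\beta^\top w$, the two problems are equivalent (including the case where \eqref{eq:qk**} is infeasible, which corresponds to $Q(U^*)=\emptyset$).

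The only slightly delicate point is the relaxation of the defining equality for $s_l$ to the inequality $s_l\ge\log(\sum_j\exp(\beta^\top X_j))$: one must check that enlarging the feasible set in this way cannot create a spurious optimum. This is immediate, because $s_l$ appears in the remaining constraints only through $-s_h$ in $\beta^\top X_{i_h}-s_h\ge\log U_h^*$, so decreasing any $s_l$ toward its lower bound only relaxes the program while leaving $\beta^\top w$ untouched; thus any feasible point can be pushed to one with $s_l$ at equality without changing the objective. I do not anticipate any further obstacle, as the rest is the standard conic modeling of log-sum-exp constraints and parallels the argument already given for Theorem~\ref{thm:cone1}.
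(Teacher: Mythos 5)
Your proposal is correct and follows essentially the same route as the paper's proof: take logarithms of the constraints $U_h^*\le q_h(\beta)$, introduce epigraph variables $s_l$ for the log-sum-exp terms, split the sums with variables $t_{j,l}$, and encode $t_{j,l}\ge\exp(\beta^\top X_j-s_l)$ as membership in $K_{exp}$. Your explicit two-way feasible-point correspondence and the check that relaxing the equality for $s_l$ to an inequality creates no spurious optima are details the paper leaves implicit, but the underlying argument is the same.
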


The proofs can be found in Appendix~\ref{appA}. The implementation can be done through large-scale optimization software Mosek \citep{mosek2015mosek}.

\section{Theory}\label{sec:theory}

\subsection{\yf{Mode of the generalized fiducial distribution}}

We define for $i=1,\ldots,n,$ 
\begin{equation}\label{eq:qin}
p_i(\beta)=\begin{cases}
\frac{\exp(\beta^\top X_i)}{\sum_{j\in \mathcal R_i}\exp(\beta^\top X_j)} & \mbox{if failure;}\\
1 & \mbox{if censored.}
\end{cases}
\end{equation}
%We omit the argument $\beta$ in $q_i(\beta)$ unless we need to specify it.
We also define the counting process 
\begin{align*}
N_i(t) =  I\{Y_i \leq t\}\Delta_i,
\end{align*}
and the at-risk process
\begin{align*}
Y_i(t) = I\{Y_i \geq t\}.
\end{align*}

\yifan{First, we show an interesting property of the mode of the fiducial distribution, which in fact, is a maximum likelihood estimator.}
\begin{theorem}\label{lemma}
For a given dataset, any $\beta$ maximizing fiducial probability $P^*(\beta\in Q(U^*))$ is a maximum likelihood estimator.
\end{theorem}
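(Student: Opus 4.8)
The plan is to evaluate the fiducial probability $P^*(\beta\in Q(U^*))$ in closed form and recognize it, up to a factor not depending on $\beta$, as Cox's partial likelihood $L_n(\beta)$. Recall from Section~\ref{s:DGAinversion} that (with all failure times distinct) $U^*=(U_1^*,\ldots,U_m^*)$ is uniformly distributed on
\[
\mathcal U:=\Big\{u\in[0,1]^m:\ \exists\,\beta'\ \text{with}\ u_k\le q_k(\beta')\ \text{for all}\ k=1,\ldots,m\Big\},
\]
the set on which the system~\eqref{eq:LawrenceInv2} admits a solution (note $\mathcal U$ is Borel, being $\{u:\sup_{\beta'}\min_k(q_k(\beta')-u_k)\ge 0\}$ with $q_k$ continuous), and that $Q(u)=\{\beta:\ u_k\le q_k(\beta),\ k=1,\ldots,m\}$. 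Hence, for a fixed $\beta$,
\[
P^*\big(\beta\in Q(U^*)\big)=P^*\big(U_k^*\le q_k(\beta),\ k=1,\ldots,m\big)=\frac{\mathrm{vol}\big(\mathcal U\cap\textstyle\prod_{k=1}^m[0,q_k(\beta)]\big)}{\mathrm{vol}(\mathcal U)},
\]
where $\mathrm{vol}$ denotes Lebesgue measure on $\mathbb R^m$.

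Next I would simplify the numerator. Since $i_k\in\mathcal R_{i_k}$ we have $0\le q_k(\beta)\le 1$ for every $k$, so the box $\prod_{k=1}^m[0,q_k(\beta)]$ lies in $[0,1]^m$; moreover $\beta$ itself witnesses that every point of this box belongs to $\mathcal U$, so $\prod_{k=1}^m[0,q_k(\beta)]\subseteq\mathcal U$ and the intersection in the numerator is the whole box. Therefore
\[
P^*\big(\beta\in Q(U^*)\big)=\frac{\prod_{k=1}^m q_k(\beta)}{\mathrm{vol}(\mathcal U)}.
\]
The denominator depends only on the observed data (through the at-risk sets and covariates), it is strictly positive because $\mathcal U$ contains the nonempty box $\prod_k[0,q_k(\tilde\beta)]$, and it is finite because $\mathcal U\subseteq[0,1]^m$; in particular it does not depend on $\beta$. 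For the numerator, comparing factors with~\eqref{eq:qin} and using that the $m$ failures are exactly the indices $i_1,\ldots,i_m$ while each censored subject contributes the factor $1$, we obtain $\prod_{k=1}^m q_k(\beta)=\prod_{i=1}^n p_i(\beta)=L_n(\beta)$.

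Combining the two displays, $P^*(\beta\in Q(U^*))\propto L_n(\beta)$ as a function of $\beta$, so the maximizers of the fiducial probability coincide with the maximizers of Cox's partial likelihood; any such point is by definition a maximum (partial) likelihood estimator. The step that needs care — and the main obstacle — is the first chain of equalities: one must argue precisely that $U^*$ is uniform on $\mathcal U$ and that $P^*$ here refers to this marginal law of $U^*$, so that $P^*(\beta\in Q(U^*))$ genuinely equals the stated volume ratio; the second-stage draw of a representative $\beta\in Q(U^*)$ via~\eqref{eq:qk**} is irrelevant to the event $\{\beta\in Q(U^*)\}$ and can be ignored. Finally, the argument is written for distinct failure times as in Section~\ref{s:DGAinversion}; the case with ties follows in the same way from the approximate DGA of Appendix~\ref{s:generalDGA}, with $L_n$ replaced by the corresponding Peto--Breslow partial likelihood.
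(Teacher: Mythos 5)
Your proposal is correct and follows essentially the same route as the paper's proof: both reduce the claim to showing that $P^*(\beta\in Q(U^*))$ is proportional, as a function of $\beta$, to the partial likelihood $\prod_{i=1}^n p_i(\beta)$. The paper simply asserts this proportionality, whereas you supply the missing justification — that the box $\prod_{k=1}^m[0,q_k(\beta)]$ is entirely contained in the feasibility set $\mathcal U$ (with $\beta$ itself as the witness), so the volume ratio collapses to $\prod_k q_k(\beta)/\mathrm{vol}(\mathcal U)$ with a $\beta$-free denominator — which is a worthwhile elaboration of the step the paper leaves implicit.
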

\yifan{
The above result provides a straightforward route to establish consistency for the mode of the fiducial distribution.}
\begin{corollary}%[Consistency]
\yifan{Under the conditions that the maximum likelihood estimator is consistent \yf{(e.g., Conditions (2.1)-(2.6) on page 289 of \cite{fleming2013counting};  Conditions (A)-(D) of \cite{andersen1982cox})}, the mode of the fiducial distribution is consistent.}
\end{corollary}

\subsection{A Bernstein–von Mises theorem}

Note that the fiducial distribution is a data-dependent distribution which is defined
for every fixed dataset. The fiducial distribution can be made into a random measure by plugging random variables $(X,Y,\Delta)$ into the observed data. We establish a Bernstein-von Mises theorem
for this random measure for a one-dimensional case. The multivariate case holds similarly as can be seen in Appendix~\ref{appB}.

We first define some notation. 
\yf{Let $\beta^\circ$ and $\lambda_{0}^\circ(t)$ denote the true value of log hazard ratio and baseline hazard function, respectively.
For brevity, this notation is used in this section and proofs in Appendix~\ref{appB}, and is omitted elsewhere when the context remains unambiguous. 
}
Let
$S^j(\beta,t)=\frac{1}{n}\sum_{i=1}^n Y_i(t)X_i^{j}\exp(\beta X_i)$, $j=0,1,2$, and 
\begin{align*}
V(\beta,t) =& \frac{S^2(\beta,t)}{S^0(\beta,t)} - \left(\frac{S^1(\beta,t)}{S^0(\beta,t)}\right)^2\\
=& \frac{1/n\sum_{i=1}^n ( X_i - \epsilon (\beta,t) )^2 Y_i(t)\exp(\beta X_i)}{S^0(\beta,t)},
\end{align*}
where $\epsilon(\beta,t):=\frac{S^1(\beta,t)}{S^0(\beta,t)}$. 
We consider the following regularity conditions:

% [1] $\int_0^\tau \lambda_0(x)dx < \infty$.

% [2] Let the probability limit of $S^j(\beta, t)$ be $s^j(\beta, t)$. For $S^{j}$, $j=0,1,2$, there exists a neighborhood $B$ of $\beta_0$ such that
% \begin{align*}
% \sup_{t\in [0,\tau],\beta \in B}
% || S^j(\beta,t)-s^j(\beta,t)||\rightarrow 0,
% \end{align*}
% in probability.

% [3] There exists $\delta>0$ such that 
% \begin{align*}
%  n^{-1/2}  \sup_{i\in [n],t\in [0,\tau]} |X_i| Y_i(t) I\{\beta_0 X_i >-\delta |X_i| \}  \rightarrow 0,
% \end{align*}
% in probability.

% [4] Then for all $\beta\in B$ and $t\in [0,\tau]$,
% \begin{align*}
% \frac{\partial}{\partial \beta}  s^0(\beta,t)
% = s^1(\beta,t),
% \end{align*}
% and
% \begin{align*}
% \frac{\partial^2}{\partial \beta^2}  s^0(\beta,t)
% = s^2(\beta,t).
% \end{align*}

% [5] The functions $s^j$ are bounded for $j=0,1,2$, and $s^0$ is bounded away from $0$ on $B\times [0,\tau]$. The family of functions $s^j(\cdot,t),$ $j=0,1,2$, $t\in [0,\tau]$ is an equicontinuous family at $\beta_0$.

% [6] The matrix 
% \begin{align*}
% H (\beta_0)  = \int_0^\tau v(\beta_0,t)s^0(\beta_0,t) \lambda_0(t) dt
% \end{align*}
% is positive definite,
% where $v(\beta_0,t):=\frac{s^2}{s^0}-e^2$ and $e:=s^1/s^0$.

[1] There exists $\tau$ so that $P(Y_i(\tau)>0)>0$.

[2] There exists a constant $M$, so that $|X_i|<M$.

[3] Let $s^j(\beta,t)$ be the probability limit of $S^j(\beta,t)$.
We assume that 
 \begin{align*}
 H (\beta^\circ)  = \int_0^\tau v(\beta^\circ,t)s^0(\beta^\circ,t) \lambda_0^\circ(t) dt
 \end{align*}
 is positive,
 where $v(\beta,t):=\frac{s^2}{s^0}-e^2$ and $e:=s^1/s^0$.

[4] We assume that 
\begin{align}\label{eq:convegew}
\frac{1}{n}
\sum_{i=1}^{n} \int_0^\tau \frac{| X_i S^0(\beta^\circ,s)- S^1(\beta^\circ,s) |}{S^0(\beta^\circ,s)}  \min_\beta \frac{S^0(\beta,s)}{\exp(\beta X_i)} dN_i(s) \rightarrow 
 \int_0^\tau w(\beta^\circ,s) ds,
\end{align}
in probability, and 
$0< \int_0^\tau w(\beta^\circ,s) ds < \infty$.

% [4] For any $\delta>0$ and $i$, there exists an $\epsilon>0$ such that
% % \yifan{
% % \begin{align*}
% % P(\sup_{|\beta-\beta_0|>\delta}
% % (\log \frac{L_n(\beta)}{\prod_i q_j(\beta)}- \log \frac{L_n(\beta_0)}{\prod_i q_j(\beta_0)})
% % \leq -n\epsilon )\rightarrow 1.
% % \end{align*}
% % }
% \begin{align*}
% P\left(\sup_{|\beta-\beta_0|>\delta}
% \left(\log \frac{L_n(\beta)}{p_i(\beta)}- \log \frac{L_n(\beta_0)}{ p_i(\beta_0)}\right)
% \leq -n\epsilon \right)\rightarrow 1.
% \end{align*}

Conditions [1]-[3] are typically assumed in the literature for maximum likelihood estimation of the Cox model \citep{andersen1982cox,fleming2013counting}.
\yf{In particular, Condition [1] assumes that there is a non-zero probability that at least some subjects are still alive and uncensored at time $\tau$, which is common in both semiparametric and nonparametric survival analysis \citep{fleming2011counting}.}
Condition [2] assumes the boundedness of covariates, while Condition [3] assumes that the variance is positive.
By Theorem 8.4.1 of \cite{fleming2013counting}, 
Conditions [1]-[3] ensure the regularity conditions given in \cite{andersen1982cox,fleming2013counting}.

Condition [4] assumes that the left hand side of \eqref{eq:convegew} has a limit.
%Condition [4] is a condition for the Bayesian posterior distribution to be close to the maximum likelihood estimator \citep{ghosh2003}.
While Bayesian analysis typically requires that the prior distribution is continuous at the true parameter value $\beta^\circ$ and assigns a positive probability at $\beta^\circ$ \citep{kim2003bayesian}, our Condition [4] plays an analogous role in the fiducial framework. 
\yf{In particular, notice that the compensator of
\begin{align*}
\int_0^\tau \frac{| X_i S^0(\beta^\circ,s)- S^1(\beta^\circ,s) |}{S^0(\beta^\circ,s)}dN_i(s) 
\end{align*}
is 
\begin{align*}
 \int_0^\tau | X_i - \epsilon (\beta^\circ,s) | Y_i(s)\exp(\beta^\circ X_i)\lambda_0^\circ(s)ds.
 \end{align*}
Compared to Condition [3] that assumes the variation of covariates in quadratic form, i.e.,  the convergence of
\begin{align*}
\frac{1}{n}
\sum_{i=1}^{n}  \int_0^\tau (X_i - \epsilon (\beta^\circ,s) )^2 Y_i(s)\exp(\beta^\circ X_i)\lambda_0^\circ(s)ds,
\end{align*}
Condition [4] essentially  assumes the variation of covariates in terms of absolute deviation with a weight $\min_\beta \frac{S^0(\beta,s)}{\exp(\beta X_i)}$, i.e., the convergence of 
\begin{align*}
\frac{1}{n}
\sum_{i=1}^{n}  \int_0^\tau |X_i - \epsilon (\beta^\circ,s)|Y_i(s)\exp(\beta^\circ X_i)\lambda_0^\circ(s)\min_\beta \frac{S^0(\beta,s)}{\exp(\beta X_i)}ds.
\end{align*}
The weight satisfies $\frac{1}{n} \leq \min_\beta \frac{1}{n}\sum_{j=1}^n Y_j(s)\exp(\beta X_j)/\exp(\beta X_i)\leq \frac{1}{n}\sum_{j=1}^n Y_j(s)$. Therefore, for $X_i$ away from the boundary of the design space, this weight does not go to zero. 
Thus, Condition [4] ensures that the limit of the discrete analogue of Jacobian \eqref{eq:ContJacobian} is well-behaved in a neighborhood of $\beta^\circ$, thereby allowing the fiducial distribution to concentrate appropriately and enabling asymptotic validity of the proposed inference procedure.}

Let $\eta=\sqrt{n}(\beta-\tilde\beta)$.
We define the rescaled distribution
$r_n(\eta) = r(\tilde \beta+\eta/\sqrt{n})/\sqrt{n}$,
where $r(\beta)$ is the density of the fiducial distribution of $\beta$ generated by Algorithm~\ref{alg:gibbs}. We have the following theorem. 

\begin{theorem}[Asymptotic normality] \label{thm:main}
Under \yifan{Conditions [1]-[4]}, we have 
\begin{align*}
\int |r_n(\eta) - f_N (\eta)| d\eta \rightarrow 0,
\end{align*}
in probability, where $f_N (\eta)$ is the density of a normal distribution with mean $0$ and variance $H^{-1}(\beta^\circ)$.
\end{theorem}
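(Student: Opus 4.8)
The plan is to prove the Bernstein--von Mises statement by establishing pointwise convergence of $r_n(\eta)$ to $f_N(\eta)$ together with a domination/tightness argument, and then invoke Scheff\'e's lemma to upgrade to the $L^1$ convergence $\int |r_n(\eta)-f_N(\eta)|\,d\eta \to 0$. The first and most important step is to obtain a workable expression for the fiducial density $r(\beta)$ itself. By the remark preceding the theorem (the ``likelihood times Jacobian'' formula \eqref{eq:JacobianForm} derived in Appendix~\ref{appB}), in the one-dimensional case with $m \geq 1$ the fiducial density of $\beta$ factors as the Cox partial likelihood $L_n(\beta)$ times a Jacobian term $J_n(\beta)$ that is built from the constraint geometry of the inversion \eqref{eq:LawrenceInv2}. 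Concretely I expect $J_n(\beta)$ to be (up to normalization) of the form $\sum_i |\partial_\beta \log q_i(\beta)| \prod_{h\neq i}(\cdots)$ or, after the dust settles, something comparable in size to $\frac{1}{n}\sum_{i=1}^n \int_0^\tau \frac{|X_i S^0(\beta,s) - S^1(\beta,s)|}{S^0(\beta,s)}\,\min_\beta\frac{S^0(\beta,s)}{\exp(\beta X_i)}\,dN_i(s)$, which is precisely the quantity appearing on the left-hand side of Condition~[3]. This is why Condition~[3] is imposed: it guarantees that $J_n(\tilde\beta + \eta/\sqrt n) \to \int_0^\tau w(\beta_0,s)\,ds =: w_0 \in (0,\infty)$ in probability, so the Jacobian contributes only a constant factor that cancels in the normalization.

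The second step is the standard Taylor expansion of the log partial likelihood around the maximum partial likelihood estimator $\tilde\beta$. Writing $\ell_n(\beta) = \log L_n(\beta)$ and $\eta = \sqrt n(\beta - \tilde\beta)$, we have $\ell_n(\tilde\beta + \eta/\sqrt n) = \ell_n(\tilde\beta) - \tfrac12 \eta^\top \big(-\tfrac1n \ell_n''(\tilde\beta)\big)\eta + o_P(1)$ uniformly on compacts, using $\ell_n'(\tilde\beta)=0$ and the fact that $-\tfrac1n \ell_n''(\beta) \to H(\beta_0)$ in probability, uniformly in a neighborhood of $\beta_0$; the latter is the classical partial-likelihood information convergence, which follows from Conditions~[1]--[2] and Assumption~\ref{asm:tau} via Theorem~8.4.1 of \cite{fleming2013counting}. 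Combining this with the previous step, the unnormalized rescaled density is
\begin{align*}
L_n(\tilde\beta + \eta/\sqrt n)\, J_n(\tilde\beta + \eta/\sqrt n)
= L_n(\tilde\beta)\, w_0 \exp\!\Big(-\tfrac12 \eta^\top H(\beta_0)\eta\Big)\,(1+o_P(1)),
\end{align*}
pointwise in $\eta$, so after dividing by the normalizing constant $\int L_n(\tilde\beta+\eta/\sqrt n) J_n(\tilde\beta+\eta/\sqrt n)\,d\eta$ we get $r_n(\eta) \to f_N(\eta)$ pointwise in probability, where $f_N$ is the $N(0, H^{-1}(\beta_0))$ density.

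The third step is to control the tails so that Scheff\'e applies and the normalizing constant behaves. I would argue that $\ell_n(\tilde\beta+\eta/\sqrt n) - \ell_n(\tilde\beta) \leq -c|\eta|^2$ for $|\eta|$ in a fixed neighborhood and $\leq -c'|\eta|$ (or faster) outside it, using strict concavity of the partial log-likelihood together with boundedness of covariates (Condition~[1]) to get a uniform-in-$n$ exponential envelope $g(\eta)$ with $\int g < \infty$ dominating $r_n(\eta)$ on the region where the expansion is valid, and separately bounding the contribution of $\{|\eta| > \delta\sqrt n\}$ to both numerator and normalizing constant. The Jacobian $J_n$ must also be shown not to spoil this: Condition~[1] bounds $|X_i|$ and hence bounds $q_i(\beta)$ and its derivative, so $J_n(\beta)$ grows at most polynomially in $|\beta|$ and is swallowed by the exponential decay of $L_n$. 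Given pointwise convergence plus this uniform integrable domination, the generalized dominated convergence theorem (or Scheff\'e's lemma applied to the normalized densities) yields $\int |r_n(\eta) - f_N(\eta)|\,d\eta \to 0$ in probability.

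The main obstacle I anticipate is \emph{Step~one}: pinning down the exact form of the fiducial Jacobian $J_n(\beta)$ from the Gibbs-sampler/conic-inversion construction and showing rigorously that its rescaled limit is governed by the quantity in \eqref{eq:convegew}. The subtlety is that the fiducial measure for discrete data arises from a set-valued inverse (cf.\ the remark citing \cite{Hannig2013, liu2016generalized}), so the ``density'' of $\beta$ is obtained only after integrating out the auxiliary uniforms $U_k^*$ over the feasible polytope $Q(U^*)$; making the change of variables and extracting the Jacobian determinant of the map $(U^*,\text{direction }w)\mapsto \beta$ at the scale $\beta = \tilde\beta + \eta/\sqrt n$ requires care, and is presumably exactly what Appendix~\ref{appB} is doing. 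Once \eqref{eq:JacobianForm} is in hand and Condition~[3] identifies its limit, the remaining steps are the routine BvM machinery (Taylor expansion, information convergence, tail domination, Scheff\'e).
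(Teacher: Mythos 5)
Your proposal follows essentially the same route as the paper's proof: the fiducial density is written in the likelihood-times-Jacobian form \eqref{eq:JacobianForm} (your guessed form $\sum_i |\partial_\beta \log p_i|\prod_j p_j$ is exactly what the one-active-constraint observation yields), Condition~[3] is used precisely to identify the limit $\int_0^\tau w(\beta_0,s)\,ds$ of the rescaled Jacobian, the Gaussian shape comes from the Taylor expansion of $\log L_n$ at $\tilde\beta$ together with $-\tfrac1n \ell_n'' \to H(\beta_0)$, the tails are controlled by concavity of the log partial likelihood (the paper peels off one integrable factor $p_1$ and cites \cite{kim2003bayesian}), and the conclusion follows from a Scheff\'e-type argument (Theorem~21 of \cite{ferguson1996course}). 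The steps you flag as routine and the step you flag as the main obstacle are exactly the ones the paper treats as such, so the proposal is correct and matches the paper's argument.
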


\begin{remark}
\yifan{We note that the proposed fiducial distribution shares the same asymptotic distribution as the maximum likelihood estimator \citep{andersen1982cox} and Bayesian bootstrap estimator \citep{kim2003bayesian}, ensuring asymptotic efficiency under regularity conditions.}
\end{remark}

To conclude, we provide the following corollary which shows that the proposed confidence intervals have asymptotically correct coverage.

\begin{corollary}[Coverage property]\label{thm:coverage}
Under the assumptions in Theorem~\ref{thm:main}, any set $C_{n,\alpha} = \{\beta: ||\beta-\tilde \beta|| \leq \epsilon_{n,\alpha} \}$, \yf{with the radius $\epsilon_{n,\alpha}$ chosen so that $P^*(C_{n,\alpha}) = 1 -\alpha$,} is an $(1 - \alpha)$ asymptotic confidence set for $\beta^\circ$.
\end{corollary}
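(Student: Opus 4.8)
The plan is to deduce the coverage statement directly from the Bernstein–von Mises Theorem~\ref{thm:main} together with the classical asymptotic normality of the maximum partial likelihood estimator $\tilde\beta$. I work in the one-dimensional setting of Theorem~\ref{thm:main}; write $\sigma^2 := H^{-1}(\beta_0)$ for the (deterministic) limiting variance and $z_{1-\alpha/2}$ for the standard normal quantile, so that $a^* := \sigma z_{1-\alpha/2}$ is the unique $a\in[0,\infty)$ with $\int_{-a}^{a} f_N(\eta)\,d\eta = 1-\alpha$. First I would pass to the rescaled coordinate $\eta = \sqrt{n}(\beta-\tilde\beta)$ used in Theorem~\ref{thm:main}: the defining property $P^*(C_{n,\alpha}) = 1-\alpha$ becomes $\int_{-a_n}^{a_n} r_n(\eta)\,d\eta = 1-\alpha$, where $a_n := \sqrt{n}\,\epsilon_{n,\alpha}$ is a data-dependent radius.

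The key step is to show $a_n \to a^*$ in probability. By Theorem~\ref{thm:main}, $\int |r_n(\eta) - f_N(\eta)|\,d\eta \to 0$ in probability, hence $\bigl|\int_{-a_n}^{a_n} r_n(\eta)\,d\eta - \int_{-a_n}^{a_n} f_N(\eta)\,d\eta\bigr| \le \int |r_n - f_N| \to 0$ in probability, so $\int_{-a_n}^{a_n} f_N(\eta)\,d\eta \to 1-\alpha$ in probability. Since $g(a) := \int_{-a}^{a} f_N(\eta)\,d\eta$ is a continuous, strictly increasing bijection of $[0,\infty)$ onto $[0,1)$ with continuous inverse and $g(a^*) = 1-\alpha$, a standard subsequence argument finishes this step: along any subsequence there is a further subsequence on which the total-variation distance tends to $0$ almost surely, forcing $g(a_n) \to 1-\alpha$ and hence $a_n = g^{-1}(g(a_n)) \to a^*$ along it, which yields $a_n \to a^*$ in probability.

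Next I would combine this with $Z_n := \sqrt{n}(\tilde\beta - \beta_0) \Rightarrow N(0,\sigma^2)$, the Andersen--Gill asymptotics for the maximum partial likelihood estimator, which hold under Assumption~\ref{asm:tau} and Conditions [1]--[2] \citep{andersen1982cox}. Then $P(\beta_0 \in C_{n,\alpha}) = P\bigl(\sqrt{n}\,|\tilde\beta-\beta_0| \le a_n\bigr) = P(|Z_n| - a_n \le 0)$, and since $(z,a)\mapsto |z|-a$ is continuous, $a_n \to a^*$ in probability (a constant), and $P(|Z| = a^*) = 0$ for $Z \sim N(0,\sigma^2)$, the continuous-mapping/Slutsky theorem gives $P(|Z_n| - a_n \le 0) \to P(|Z| \le a^*) = 1-\alpha$, which is the claim. (Alternatively one sandwiches $a_n$ between $a^*\pm\delta$ on an event of probability tending to one and lets $\delta\downarrow 0$ using continuity of the limiting CDF at $a^*$.)

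The main obstacle is the interplay between the randomness of the fiducial measure and the deterministic Gaussian limit: the total-variation convergence in Theorem~\ref{thm:main} is only in probability, so one cannot invert the identity $\int_{-a_n}^{a_n} r_n = 1-\alpha$ pathwise to identify $a_n$, and must instead upgrade the in-probability total-variation statement to in-probability convergence of the data-dependent radius $a_n$ via subsequences; once that is in hand the remainder is routine. The multivariate case is entirely analogous, replacing the symmetric interval by the Euclidean ball centered at $\tilde\beta$ and $a^*$ by the corresponding quantile of $\|N(0,H^{-1}(\beta_0))\|$, using the multivariate Bernstein--von Mises statement in Appendix~\ref{appB}.
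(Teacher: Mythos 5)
Your proposal is correct and follows essentially the same route as the paper's proof: rescale by $\sqrt{n}$, use the Bernstein--von Mises theorem to pin down the limit of the data-dependent radius $\sqrt{n}\,\epsilon_{n,\alpha}$, and then invoke the asymptotic normality of $\tilde\beta$ to compute the coverage. In fact you supply a detail the paper leaves implicit --- the paper simply asserts that $n^{1/2}\epsilon_{n,\alpha}$ has a unique limit $\epsilon_\infty$, whereas your subsequence argument upgrading the in-probability total-variation convergence to $a_n \to a^*$ in probability actually justifies that step.
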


\begin{remark}
Notice that in the proof of Theorem~\ref{thm:main} we did not use the exact form of Equation~\eqref{eq:qk**} selecting a particular solution out of the random feasible set \eqref{eq:Qfeasable}. Consequently, Theorem~\ref{thm:main} holds for any selection rule. The reason we propose using \eqref{eq:qk**} is based on its performance in small samples. 
\end{remark}

\section{Simulation studies}\label{sec:simulation}

In this simulation, we compare our estimator with the maximum likelihood estimator. We consider \yf{five scenarios} with different combinations of $\beta=(\beta_1,\beta_2)$. \yifan{The implementation code is available at \hyperref[https://github.com/yifan-cui/Semiparametric-fiducial-Cox-models]{https://github.com/yifan-cui/Semiparametric-fiducial-Cox-models}.}

The survival time follows
$$\lambda_T (t) = \lambda_0(t) \exp[\beta_1 X^{(1)}+\beta_2 X^{(2)} ],$$
where the baseline hazard function $\lambda_0(t) = 1$. 
The censoring time is uniformly distributed on $(0,2)$. 
The covariates $X^{(1)}$ and $X^{(2)}$ follow a binomial distribution with success probability $1/2$.

For each scenario, training datasets $(X,Y,\Delta)$ were generated with a sample size $n = 20$. \yifan{Additional simulations are provided in the Supplementary Material.} For each training dataset, we applied our estimator as well as the maximum likelihood estimator \yf{and a Bayesian estimator with a catalytic prior that extends a general class of prior distributions \citep{li2026bayesian}}. The fiducial estimates were based on 1000 iterations of the Gibbs sampler after 100 burn-in iterations.
The simulations were replicated 200 times for each scenario.

We compare the mean squared error (MSE) of point estimators and the coverage and average length of confidence intervals (CI). The numerical results for each scenario are presented in Table~\ref{table:simu}. 
For point estimators, in general, the proposed fiducial method outperforms the maximum likelihood estimator. In particular, the maximum likelihood estimator does not converge in several runs in \yf{Scenarios~1, 2, 4, and 5}. Importantly, when the maximum likelihood estimator fails, the proposed fiducial estimator provides a valid estimate.  For uncertainty quantification, we see that the fiducial confidence interval is comparable to the maximum likelihood confidence interval and, overall, has shorter length. \yf{The Bayesian method outperforms the fiducial method in Scenarios~1 and 2. The proposed fiducial method
outperforms the Bayesian estimator in Scenarios~4 and 5. For Scenario~3, the fiducial method works better in terms of uncertainty quantification and the Bayesian method works better in terms of the point estimator.}

%%%%%%%%%%%%%%%%%
\begin{table}[]
\small
\centering
\caption{Comparison of \yf{frequentist, Bayesian, and fiducial estimators}}
\label{table:simu}
\begin{tabular}{@{}lcccc@{}}
\hline
Model & Estimator &
MSE ($\times 10^{-2}$) &
Length of CI &
Coverage of CI ($\%$)\\
\hline
{Model 1 ($\beta_1=-0.5,\beta_2=0$)} 
          & MLE $\tilde \beta_1$ & 1431$^*$ & 1863.25 & 95 \\
          & \yf{Bayesian $\check \beta_1$} & 45 & 2.54 & 94.5 \\        
          & Fiducial $\hat \beta_1$  &  83 &  3.09 & 93 \\
          & MLE $\tilde \beta_2$ & 286$^*$ &  325.89  & 95\\
          & \yf{Bayesian $\check \beta_2$} & 32 & 2.38  &96.5\\
          & Fiducial $\hat \beta_2$  &   75  & 3.09 & 94.5  \\
{Model 2 ($\beta_1=0,\beta_2=0.5$)}
          & MLE $\tilde \beta_1$ & 300$^*$ &  254.04 & 94   \\
          & \yf{Bayesian $\check \beta_1$} & 33 &  2.08 & 92.5 \\
          & Fiducial $\hat \beta_1$  &  55  & 2.64  & 92.5  \\
          & MLE $\tilde \beta_2$ & 261$^*$ &  271.17  &93 \\
          & \yf{Bayesian $\check \beta_2$} & 25 & 1.99 &95 \\
          & Fiducial $\hat \beta_2$  &  60  & 2.71  &92   \\
{Model 3 ($\beta_1=0.5,\beta_2=1$)} 
          & MLE $\tilde \beta_1$ &  57 & 2.43  & 92.5 \\
          & \yf{Bayesian $\check \beta_1$} & 33 & 1.85 & 87.5 \\
          & Fiducial $\hat \beta_1$  &  53  & 2.48 &  92 \\
          & MLE $\tilde \beta_2$ & 66 &  2.62 &  92.5 \\
          & \yf{Bayesian $\check \beta_2$} & 33 & 1.80 & 86.5 \\
          & Fiducial $\hat \beta_2$  & 61  & 2.64 & 91.5  \\
{Model 4 ($\beta_1=1,\beta_2=1.5$)} 
          & MLE $\tilde \beta_1$ & 49 & 2.47 & 94 \\
          & \yf{Bayesian $\check \beta_1$} & 50 & 1.72 &  74.5\\
          & Fiducial $\hat \beta_1$  &  42 & 2.41 & 94  \\
          & MLE $\tilde \beta_2$ & 730$^*$ & 766.87 & 95.5\\
          & \yf{Bayesian $\check \beta_2$} & 65 & 1.70 &  61\\
          & Fiducial $\hat \beta_2$  &  57   & 2.60 & 94.5  \\
\yf{Model 5 ($\beta_1=1.5,\beta_2=2$)} 
          & \yf{MLE $\tilde \beta_1$} & 458$^*$ & 378.15 & 94 \\
          & \yf{Bayesian $\check \beta_1$} & 99 & 1.64 & 33 \\
          & \yf{Fiducial $\hat \beta_1$}  &  41 & 2.34 & 93.5  \\
          & \yf{MLE $\tilde \beta_2$} & 1669$^*$ & 1570.08& 94\\
          & \yf{Bayesian $\check \beta_2$} & 136 & 1.63 & 9.5\\
          & \yf{Fiducial $\hat \beta_2$}  &  46   & 2.52 & 94.5  \\
\hline
\multicolumn{4}{l}{$*$ indicates MLE did not converge for some runs}
\end{tabular}
\end{table}
%%%%%%%%

% \begin{table}[!h]
% \caption{Old Comparison of the maximum likelihood estimator and the proposed estimator}
% \label{table:simu}
% %
% \begin{tabular}{@{}lcccc@{}}
% \hline
% Model & Estimator &
% MSE ($\times 10^{-2}$) &
% Length of CI &
% Coverage of CI ($\%$)\\
% \hline
% {Model 1 ($\beta_1=-0.5,\beta_2=0$)} 
%           & MLE $\tilde \beta_1$ & 1431$^*$ & 1863.25 & 95 \\
%           & Fiducial $\hat \beta_1$  &  84 &  3.10 & 92 \\
%           & MLE $\tilde \beta_2$ & 286$^*$ &  325.89  & 95\\
%           & Fiducial $\hat \beta_2$  &   75  & 3.08 & 95  \\
% {Model 2 ($\beta_1=0,\beta_2=0.5$)}
%           & MLE $\tilde \beta_1$ & 300$^*$ &  254.04 & 94   \\
%           & Fiducial $\hat \beta_1$  &  57  & 2.64  & 93.5  \\
%           & MLE $\tilde \beta_2$ & 261$^*$ &  271.17  &93 \\
%           & Fiducial $\hat \beta_2$  &  60  & 2.68  &92.5   \\
% {Model 3 ($\beta_1=0.5,\beta_2=1$)} 
%           & MLE $\tilde \beta_1$ &  57 & 2.43  & 92.5 \\
%           & Fiducial $\hat \beta_1$  &  54  & 2.46 &  92 \\
%           & MLE $\tilde \beta_2$ & 66 &  2.62 &  92.5 \\
%           & Fiducial $\hat \beta_2$  & 60  & 2.64 & 91.5  \\
% {Model 4 ($\beta_1=1,\beta_2=1.5$)} 
%           & MLE $\tilde \beta_1$ & 49 & 2.47 & 94 \\
%           & Fiducial $\hat \beta_1$  &  43 & 2.40 & 93.5  \\
%           & MLE $\tilde \beta_2$ & 730$^*$ & 766.87 & 95.5\\
%           & Fiducial $\hat \beta_2$  &  56   & 2.57 & 95  \\
% \hline
% \multicolumn{4}{l}{$*$ indicates log likelihood might not converge for some runs}
% \end{tabular}
% \end{table}

\section{Extension to other semiparametric models}\label{sec:extension}

\subsection{Other link functions}

The proposed method and Algorithm~\ref{alg:gibbs} naturally extend to the following model:
\[\lambda(t) = \lambda_0(t) g(\beta^\top X),\]
where $g$ is any well-behaved positive-valued link function.
The large sample theory developed in Section~\ref{sec:theory} also applies to this generalized model.

\subsection{Constrained Cox model}
In certain practical problems, some prior information that restricts model parameters would result in a more interpretable conclusion. Such restrictions cannot be ignored, otherwise the statistical inference may be biased. 
Our fiducial method automatically solves constrained Cox models \citep{ding2015new,yin2021constrained}, for example, with equality constraints  $$\beta \in \{\beta:~ f(\beta) = 0\},$$ and inequality constraints $$\beta \in \{\beta:~ g(\beta) \leq 0\}.$$

\subsection{Additive hazards model}\label{subsec:add}
Aalen’s additive hazards model \citep{aalen1980model,mckeague1986,huffer1991,lin1994semiparametric}
\begin{align*}
\lambda(t) = \beta(t)^\top X,
\end{align*}
provides an alternative to the Cox proportional hazards model.
%Basis function, Spline, Kernel, Seive 
If we rewrite the model as 
\begin{align*}
\lambda(t) = \beta_0(t)\beta(t)^\top X,
\end{align*}
where $1^\top\beta(t)=1$, the extended fiducial algorithm in Appendix~\ref{s:generalDGA} covers such scenarios if we parametrize $\beta(t)$. 
For example, one might impose a polynomial basis
\begin{align*}
\beta(t)= \alpha_0 + \alpha_1 t + \alpha_2  t^2,
\end{align*}
or a spline model, e.g., a B-spline.
In this case, the probability $\vec{q}_{k}$ in \eqref{eq:Qkl} becomes %(see also \eqref{eq:generalQ})
\begin{equation*}
    % \vec{q}_k = [q_{k,1},\ldots,q_{k,n}]^\top,
    % \mbox{ where }
    q_{k,l}(\vec{\alpha})=\begin{cases}
       \frac{\beta^\top(t_k) X_l}{\sum_{j\in \mathcal R_k(t_k)} \beta^\top(t_k) X_j}
       & \mbox{ if $l\in \mathcal R_k(t_k)$};\\
       0 & \mbox{ otherwise}.
    \end{cases}
\end{equation*}

\subsection{Time-varying covariates and coefficients}

Our proposed method also readily extends to the Cox  model with time-varying covariates \citep{fisher1999time},
$$\lambda(t) = \lambda_0(t) \exp(\beta^\top X(t)).$$
For the Cox  model with time-varying coefficients \citep{tian2005cox}, 
%$$\lambda(t) = \lambda_0(t) \exp(\beta^\top(t) X),$$ or
$$\lambda(t) = \lambda_0(t) \exp(\beta^\top(t) X(t)),$$
we can parametrize $\beta(t)$ in the same way as Section~\ref{subsec:add} and apply the proposed method to sample from the fiducial distribution and then conduct statistical inference.

\section{Real data application}\label{sec:real}

The HVTN 704/HPTN 085 and HVTN 703/HPTN 081 Antibody Mediated Prevention Phase 2b Prevention Efficacy Trials evaluated the prevention efficacy of an infused monoclonal antibody, VRC01, against the endpoint of HIV diagnosis. Participants were recruited from four continents, primarily Africa and North and South America, for random assignment 1:1:1 to treatment by a low or high dose of VRC01, or placebo. HIV diagnosis rates varied across these populations during the study in all three treatment arms, as did circulating HIV-1 strains and participant characteristics. Evaluating efficacy of a preventative intervention by Cox proportional hazards modeling is a component of the pre-planned statistical analyses of this and similar trials, however, the sample sizes are limited for such models, especially when evaluating efficacy in specific sub-populations. The original study overall found that there was no significant efficacy against diagnosis of HIV-1 disease overall, however when evaluated against diagnosis of strains of HIV-1 that are susceptible to neutralization by VRC01, a pre-specified analysis, the estimated intervention efficacy, pooled across the trials, was 75.4\% (95\% CI 45.5\% to 88.9\%), supporting further research into passive immunoprophylaxis for HIV and the development of HIV-1 vaccines that elicit neutralizing antibodies \citep{doi:10.1056/NEJMoa2031738}.

The preventative efficacy of the VRC01 infusion intervention was not reported for specific sub-populations, for example by analysis within different countries that participated in the study. Here we employed our fiducial Cox analysis for evaluating efficacy within subpopulations that are too small for reliable Cox analysis by standard methodology. In this paper, we conducted a sub-population analysis to evaluate the pooled efficacy of the VRC01 infusion intervention against diagnosis of infection by susceptible HIV-1 in the subset of participants who were recruited at sites in Malawi ($n = 180$, of whom only three were diagnosed with VRC01-susceptible HIV-1) and found that by standard Cox regression analysis (employing the maximum likelihood estimator), the estimator did not converge. Employing the fiducial estimator that we have described here, we found some evidence of a treatment effect: fiducial point estimator for efficacy 82.0\%, one-sided 90\% CI (39.9\%, 100\%), 95\% CI (-4.2\%, 100\%), with fiducial p = 0.053 for efficacy departing from 0\%.
\yifan{The dataset and implementation code are available at \hyperref[https://github.com/yifan-cui/Semiparametric-fiducial-Cox-models]{https://github.com/yifan-cui/Semiparametric-fiducial-Cox-models}.}

\section{Discussion}\label{sec:discussion}

In this paper, we have considered fiducial inference in semiparametric models. Taking the Cox proportional hazards model as a running example, we proposed a novel Gibbs sampler to sample from the fiducial distribution. We have also established the consistency and asymptotic normality of the proposed estimator. In addition, we have also discussed several extensions of our approach to other semiparametric models. Our approach was illustrated via simulation studies and a real data application. \yifan{Our paper contributes to the literature on both fiducial inference and semiparametric inference.}

The proposed method may be extended in several directions.
One possible extension is to consider variable selection in semiparametric models such as the Cox model \citep{fan2002variable,he2020functional} following \cite{williams2019}. It is also possible to consider functional predictors in semiparametric models \citep{chen2011stringing,hao2021semiparametric}.
Another important direction is to consider fiducial approaches to other semiparametric transformation models \citep{cheng1995analysis,zeng2006efficient}
and semiparametric models in causal inference \citep{robins1994estimation,bickel2001inference,laan2003unified} which also has a coarsened data structure.

\section*{Acknowledgement}

The authors are thankful to Quoc Tran Dinh for helpful discussions on optimization.
This research was partially supported by the National Key R\&D Program of China (2024YFA1015600), the National Natural Science Foundation of China (12471266 and U23A2064), the National Institute of Allergy and Infectious Diseases of the National Institutes of Health (UM1AI068635 and R37AI054165), and the National Science Foundation (DMS-1916115, 2113404, 2210337). The content is solely the responsibility of the authors and does not necessarily represent the official views of the National Institutes of Health.

%%%%%%%%%%%%%%%%%%%%%%%%%%%%%%%%%%%%%%%%%%%%%%
%% Example with single Appendix:            %%
%%%%%%%%%%%%%%%%%%%%%%%%%%%%%%%%%%%%%%%%%%%%%%
% \begin{appendix}
% \section*{Title}\label{appn} %% if no title is needed, leave empty \section*{}.
% Appendices should be provided in \verb|{appendix}| environment,
% before Acknowledgements.

% If there is only one appendix,
% then please refer to it in text as \ldots\ in the \hyperref[appn]{Appendix}.
% \end{appendix}
%%%%%%%%%%%%%%%%%%%%%%%%%%%%%%%%%%%%%%%%%%%%%%
%% Example with multiple Appendixes:        %%
%%%%%%%%%%%%%%%%%%%%%%%%%%%%%%%%%%%%%%%%%%%%%%

\section*{Appendix}
\begin{appendix}

\section{Proof of Proposition~\ref{prop:compatible}}

\begin{proof}
Notice that based on the DGA, the likelihood that the first failure time is equal to $Y_{i_1}$ is
\[
-\frac{d}{dt} \bar S_1(Y_{i_1}) = \bar \lambda_1(Y_{i_1}) \exp(-\bar\Lambda_1(Y_{i_1})),
\]
and the probability that the subject $i_1$ is the first observed failure is
\[
 \frac{\exp(\beta^\top X_{i_1})}{\sum_{j\in \mathcal R_1(Y_{i_1})} \exp(\beta^\top X_j)}.
\]

Similarly, given the first $k-1$ observed failure times the conditional likelihood of $t_k=Y_{i_k}$ given $t_1=Y_{i_1}, \ldots, t_{k-1}=Y_{i_{k-1}}$ is 
\[
-\frac{d}{dt} \bar S_k(Y_{i_k}) = \bar \lambda_k(Y_{i_k}) \exp(-\bar\Lambda_k(Y_{i_k})),
\]
and the conditional probability that the subject $i_k$ is the $k$-th observed failure is
\[
 \frac{\exp(\beta^\top X_{i_k})}{\sum_{j\in \mathcal R_k(Y_{i_k})} \exp(\beta^\top X_j)}.
\]

The joint likelihood implied by our DGA is 
\begin{equation}\label{eq:preLikelihood}
 \prod_{k=1}^m  \frac{\exp(\beta^\top X_{i_k})}{\sum_{j\in \mathcal R_k(Y_{i_k})} \exp(\beta^\top X_j)}\left[\lambda_0(Y_{i_k}){\sum_{j\in \mathcal R_k(Y_{i_k})} \exp(\beta^\top X_j)} \right]\exp\left(-\bar \Lambda_k(Y_{i_k})\right).
\end{equation}
By combining integrals together and redistributing the sums over the at-risk sets we get
\[
 \exp\left(-\sum_{k=1}^m \bar\Lambda_k(Y_{i_k})\right) = \prod_{i=1}^n S_i(Y_i).
\]
By rearranging terms in \eqref{eq:preLikelihood} we get \eqref{eq:CoxCompleteL} as desired.
\end{proof}

\section{Fiducial inversion for baseline hazard}\label{s:BaselineHazard}
In this section, we will continue the inversion process described in Section~\ref{s:DGAinversion} to derive GFD for the baseline hazard in the Cox model. 

Recall that $t_1,\ldots,t_m$ denote the ordered failure times. Set $t_0=0$ and $t_{m+1}=\infty$ to simplify notation.  Given $\beta^*$ a sample from GFD for $\beta$, the inverse mapping of the part of the DGA concerning $\lambda_0$ is
\begin{equation}\label{eq:Qlambda}
  Q_{\beta^*}=\{\lambda_0(t) : t_k=\bar S_k^{-1}(W_k^*),\  k=1,\ldots, m+1\},
\end{equation}
where $W_k^*$ are i.i.d.~$\text{Uniform}(0,1)$, and $\beta$ is replaced by the fiducial sample $\beta^*$ in the definition of $\bar S_k$ as introduced in Section~\ref{s:basicDGA}.

For simplicity of calculation, the version of $\lambda_0$ we select is piecewise constant 
\[
\lambda_{0}^*(t)= \sum_{k=1}^{m+1} \lambda^*_k I_{(t_{k-1},t_k]}(t).
\]
Notice that
for any $k=1,\ldots,m$, the equation $t_k=\bar S_k^{-1}(W_k^*)$ is satisfied if and only if 
\[
\lambda_k^*=-\log(W_k^*)/l_k\sim \mbox{Exp}(l_k),
\] where  $l_k=\sum_{i=1}^n \left(t_k\wedge Y_i-t_{k-1}\wedge Y_i\right)\exp(\beta^{*\top} X_i)$.
Finally, consider $k=m+1$. Since we did not observe the $m+1$-th failure, we only get partial information. In particular, $t_{m+1}=\bar S_{m+1}^{-1}(W_{m+1}^*)$ implies $\lambda_{m+1}^*\leq-\log(W_{m+1}^*)/l_{m+1}$; we recommend using $\lambda_{m+1}^*\sim\mbox{Exp}(l_m\vee 2l_{m+1})$.

The fiducial sampler for the baseline hazard is summarized in Algorithm~\ref{alg:baseline}.

\begin{center} 
\begin{minipage}{\linewidth}
\begin{algorithm}[H] %\SetAlgoNoLine
\SetAlgoLined
\caption{A fiducial sampler for the baseline hazard \label{alg:baseline}}
\KwIn{Dataset  $(X_i, Y_i, \Delta_i)$,
      %~~~~~~~~~~ vector $\bu$ of length $n_{\text{grid}}$,\\
fiducial samples $\beta_j^*$   %,$j=n_{\text{burn}}+1,\ldots,n_{\text{burn}}+n_{\text{mcmc}}$
}
\ShowLn Calculate $l_k=\sum_{i=1}^n \left(t_k\wedge Y_i-t_{k-1}\wedge Y_i\right)\exp(\beta^{*\top} X_i), k=1,\ldots,m+1$;\\
\ShowLn Generate independent $\lambda_k^* \sim \mbox{Exp}(l_k),k=1,\ldots,m,$ and $\lambda_{m+1}^* \sim \mbox{Exp}(l_m\vee 2l_{m+1})$;\\
\ShowLn Generate $\lambda_{0,j}^*(t)= \sum_{k=1}^{m+1} \lambda^*_k I_{(t_{k-1},t_k]}(t);$\\
\Return  The fiducial samples $\lambda^*_{0,j}(t)$.
\end{algorithm}
\end{minipage}
\end{center}

Because the fiducial inversion does not use any smoothness assumptions on $\lambda_0(t)$, the fiducial distribution of $\lambda_0^*(t)$ is very rough and does not concentrate near the true distribution. However, using similar arguments as in \cite{cuihannig2019}, the fiducial distribution of the cumulative baseline hazard $\Lambda_0^*(t)=\int_0^t \lambda_0^*(s)\,ds$ satisfies a Bernstein-von Mises theorem, and concentrates near the true cumulative baseline hazard. 
Finally, we remark that the confidence intervals for the cumulative baseline hazard were very conservative. This is not surprising as the proposed fiducial inference concentrates its power on estimating $\beta$. 

\begin{remark}
\yifan{As we have seen in inverting the DGA, the fact that $\lambda_0$ can be chosen freely implies that the failure times do not carry any substantial information about $\beta$.}
If additional restrictions were placed on the baseline hazard, e.g., some level of smoothness, then the inversion \eqref{eq:Qlambda} may not exist for all $W_j^*$, $j=1,\ldots,m+1$. This in turn would change the GFD of $\beta^*$ by reweighting it by the probability that the inversion exists making it no longer related to the partial likelihood of the Cox model. This insight provides a new understanding of the fact that 
the Cox model is based on a likelihood function that eliminates the nuisance parameter $\lambda_0(t)$ which can be arbitrary, leaving a function that depends only on the regression coefficients of interest \citep{cox1972regression}. %This innovation made the Cox model extremely practical and is a key reason why it remains a cornerstone of modern survival analysis.
\end{remark}

\section{Data generating algorithm}\label{s:generalDGA}

In this section, we discuss a generalization of the DGA in Section~\ref{s:basicDGA} to more general survival models.
Recall that for each subject $i=1, \ldots, n$, the cumulative hazard is $\Lambda_i$, the subject's survival function is $S_i(t)=\exp(-\Lambda_i(t))$, and their potentially counterfactual censoring time is $c_i$. The subjects are assumed to be independent.  

We proceed by iteratively generating the failure times $t_k$ and the set of subjects that failed at that time $d_k$, $k=1,\ldots,K$. 
We will denote by $m_k=|d_k|$ the number of failures at time $t_k$; notice that $m=\sum_{k=1}^K m_k$.
The $k$-th failure removed set $\bar{\mathcal{R}}_k=\{1,\ldots,n\}\setminus \bigcup_{l=1}^{k-1} d_l$, 
and 
\[\bar S_k(t)=\prod_{i\in \bar{\mathcal R}_k} \frac{S_i((t_{k-1}\vee t)\wedge c_i)}{S_i(t_{k-1}\wedge c_i)},\] where $t_0=0$.
Notice that the freezing of the survival function at the censoring times used together with the failure removed set has the same effect as using the usual at-risk set. 

The $k$-th failure time is generated by 
$t_k=\bar S_k^{-1}(W_k)$, where $W_k$ are i.i.d.~$\text{Uniform}(0,1)$. 
Next, we need to generate which subjects $d_k$ failed at time $t_k$.
To this end, let $B_{k,i}^{dt},\ i\in\bar{\mathcal R}_k$ be independent Bernoulli$(\vec{q_k}^{dt})$, conditioned on the event $\{\sum_{i\in\bar{\mathcal R}_k} B_{k,i}^{dt}\geq 1\}$, where
$q_{k,i}^{dt}=\frac{S_i((t_k-dt)\wedge c_i)- S_i(t_k\wedge c_i)}{S_i((t_k-dt)\wedge c_i)}$.
Denote by $B_{k,i}$ the limiting distribution of $B_{k,i}^{dt}$ as $dt\to 0$. The set of subjects that failed at time $t_k$ is generated by sampling $B_{k,i}$ and setting $d_k=\{i: B_{k,i}=1\}$. 
This process is continued until either  $\bar{\mathcal R}_{k+1}=\emptyset$, or the generated failure time $t_k=\infty$.

When $\bar S_k(t)$ is continuous at $t_k$, the limiting distribution 
has only one failure with probability one which is selected from the 
multinomial$(1,\vec{q}_{k})$, where
\begin{equation}\label{eq:generalQ}
q_{k,i} = \begin{cases} \frac{\frac d{dt}\Lambda_i(t_k)}{\sum_{j\in\mathcal R_k(t_k)} \frac d{dt}\Lambda_j(t_k)} & \mbox{ if $i \in\mathcal R_k(t_k)$};\\
   0 & \mbox{otherwise},
   \end{cases}
\end{equation}
where $\mathcal R_k(t_k)$ is the at risk set defined in \eqref{eq:atRiskSet}.
Thus if $\Lambda_i(t) = \Lambda_0(t) \exp(\beta^\top X_i)$, the multinomial probability $\vec{q}_{k}$ is the same as in \eqref{eq:Qkl}. Additionally, if $\Lambda_0(t)=\int_0^t \lambda_0(s)\,ds$ we have exactly the same DGA as in Section~\ref{s:basicDGA}.

When $\bar S_k(t)$ has a jump at $t_k$, then we can have more than one failure at that time. For any $d_k\subset\mathcal R_k (t_k)$, the probability of generating these failures is given by
\begin{equation}\label{eq:generalJumpP}
P(d_k \mbox{ is selected})=
 \frac{\bar S_k(t_k^-)}{\bar S_k(t_k^-)-\bar S_k(t_k)}\,
 \prod_{i\in d_k}\frac{S_i(t_{k}^-)-S_i(t_{k})}{S_i(t_{k}^-)}
 \prod_{i\in {\mathcal R}_{k}(t_k)\setminus d_k}
 \frac{S_i(t_{k})}{S_i(t_{k}^-)},
\end{equation}
where $S_i(t^-)$ denotes the left limit of the survival function. 

Using an argument similar to the proof of Proposition~\ref{prop:compatible}, we can show that this DGA produces the same likelihood as generating each failure time individually. 
However, even if $\Lambda_i(t) = \Lambda_0(t) \exp(\beta^\top X_i)$, the probability \eqref{eq:generalJumpP} does not provide the usual simple partial likelihood. Therefore, following the usual practice of using approximate likelihood \citep{peto1972contribution,kalbfleisch1973marginal,breslow1974covariance,efron1977efficiency}, we propose an approximate DGA for this setting:

First, we approximate the probability 
\[
\frac{S_i(t_{k}^-)-S_i(t_{k})}{S_i(t_{k}^-)}
\approx (\Lambda_0(t_k)- \Lambda_0(t_k^-))\exp(\beta^\top X_i).
\]
Next, at each time $t_k$, we approximate the distribution of the number of failures 
$m_k=\sum_{i\in\mathcal R_k(t_k)} B_{k,i}$ using the Poisson$(\eta_k)$ distribution conditional on the set $\{m_k\geq 1\}$, where 
\[
 \eta_k=\sum_{i\in\mathcal R_k(t_k)} (\Lambda_0(t_k)- \Lambda_0(t_k^-))\exp(\beta^\top X_i).
\]
Finally, given $m_k$, the set $d_k$ is generated from the multinomial$(m_k,\vec{q}_k)$ distribution
conditional on the event that each category is observed at most once, with  $\vec{q}_{k}$ given by \eqref{eq:Qkl}.

\section{Conic optimization}\label{appA}

In this section, we provide proofs of Theorems~\ref{thm:cone1} and \ref{thm:cone2}.
For the problem~\eqref{eq:qk*}, we aim to 
\begin{align*}
\max_{\beta} \frac{\exp(\beta^\top X_{i_k})}{\sum_{j \in \mathcal R_{i_k}} \exp(\beta^\top X_j)}
\end{align*}
subject to $U^*_h\leq q_h(\beta)$ for any $h\neq k$.
It is equivalent to 
\begin{align*}
\max_{\beta} \beta^\top X_{i_k} - \log\left(\sum_{j\in \mathcal R_{i_k}} \exp(\beta^\top X_j)\right)
\end{align*}
subject to $\log(U^*_h)\leq \beta^\top X_{i_h} - \log(\sum_{j\in \mathcal R_{i_h}} \exp(\beta^\top X_j))$ for any $h\neq k$.

By introducing decision variables $s_l$, it is further equivalent to
\begin{align*}
\min_{\beta,s_1,\ldots,s_m} -(\beta^\top X_{i_k} -s_k) \quad &\\
\text{subject to}~~~\log\left(\sum_{j\in \mathcal R_{i_l}} \exp(\beta^\top X_j)\right) \leq s_l, \quad & l=1,\cdots,m\\
\beta^\top X_{i_h} - s_h\geq \log(U^*_h),\quad 
h\neq k, \quad & h=1,\cdots,m.
\end{align*}

By introducing decision variables $t_{j,l}$, the optimization becomes 
\begin{align*}
\min_{\beta,s_1,\ldots,s_m,t_{j,l}} -(\beta^\top X_{i_k} -s_k) \quad &\\
\text{subject to}~~~
\sum_{j\in \mathcal R_l} t_{j,l} \leq 1, \quad & l=1,\cdots,m\\
\exp(\beta^\top X_j - s_l) \leq t_{j,l},\quad  j\in \mathcal R_l,\quad & l=1,\ldots,m\\
\beta^\top X_{i_h} - s_h\geq \log(U^*_h),\quad 
h\neq k, \quad & h=1,\cdots,m.
\end{align*}

The final optimization problem becomes 
\begin{align*}
\min_{\beta,s_1,\ldots,s_m,t_{j,l}} -(\beta^\top X_{i_k} -s_k) \quad &\\
\text{subject to}~~~\sum_{j\in \mathcal R_{i_l}} t_{j,l} \leq 1, \quad & l=1,\cdots,m\\
(t_{j,l}, 1, \beta^\top X_j  - s_l) \in K_{exp},\quad   j\in \mathcal R_{i_l},\quad & l=1,\ldots,m\\
\beta^\top X_{i_h} - s_h\geq \log(U^*_h),\quad 
h\neq k, \quad & h=1,\cdots,m.
\end{align*}

For the problem~\eqref{eq:qk**}, 
we aim to 
\begin{align*}
\max_{\beta} \beta^\top w
\end{align*}
subject to $U^*_h\leq q_h(\beta)$ for $h=1,\ldots, m$.
It is equivalent to
\begin{align*}
\max_{\beta} \beta^\top w
\end{align*}
subject to $\log(U^*_h)\leq \beta^\top X_{i_h} - \log(\sum_{j\in \mathcal R_{i_h}} \exp(\beta^\top X_j))$ for any $h$.

By introducing decision variables $s_l$, it is further equivalent to
\begin{align*}
\min_{\beta,s_1,\ldots,s_m} -\beta^\top w \quad & \\
\text{subject to}~~~ \log\left(\sum_{j\in \mathcal R_{i_l}} \exp(\beta^\top X_j)\right) \leq s_l, \quad & l=1,\cdots,m\\
\beta^\top X_{i_h} - s_h\geq \log(U^*_h),\quad &
h=1,\cdots,m.
\end{align*}

By introducing decision variables $t_{j,l}$, the optimization becomes 
\begin{align*}
\min_{\beta,s_1,\ldots,s_m,t_{j,l}} -\beta^\top w \quad &\\
\text{subject to}~~~\sum_{j\in \mathcal R_{i_l}} t_{j,l} \leq 1, \quad & l=1,\cdots,m\\
\exp(\beta^\top X_j - s_l) \leq t_{j,l},\quad  j\in \mathcal R_{i_l},\quad & l=1,\ldots,m\\
\beta^\top X_{i_h} - s_h\geq \log(U^*_h), \quad & h=1,\ldots,m.
\end{align*}

The final optimization problem becomes 
\begin{align*}
\min_{\beta,s_1,\ldots,s_m,t_{j,l}} -\beta^\top w\quad &\\
\text{subject to}~~~\sum_{j\in \mathcal R_{i_l}} t_{j,l} \leq 1, \quad & l=1,\cdots,m\\
(t_{j,l}, 1, \beta^\top X_j  - s_l) \in K_{exp},\quad   j\in \mathcal R_{i_l},\quad & l=1,\ldots,m\\
\beta^\top X_{i_h} - s_h\geq \log(U^*_h),\quad & h=1,\ldots,m.
\end{align*}

\section{Proofs}\label{appB}
%In this section, let $U_i,\ i=1,\ldots,n,$ be i.i.d.~$U(0,1)$. 

\begin{proof}[Proof of Theorem~\ref{lemma}]

Recall that the maximum likelihood estimator maximizes 
\begin{align*}
    \prod_{i=1}^n \left[ \frac{\exp(\beta^\top X_i)}{\sum_{j\in \mathcal R_i}\exp(\beta^\top X_j)} \right]^{\Delta_i} = \prod_{i=1}^n p_i(\beta),
\end{align*}
where $p_i$ are defined in \eqref{eq:qin}.
Also recall that $Q(U^*)=\{\beta: \beta\mbox{ satisfying \eqref{eq:LawrenceInv2}}\}$.
So we have that the fiducial probability
\[
P^*(\beta\in Q(U^*)) \propto \prod_{i=1}^n  p_i(\beta).
\]

By Section~2.3 of \cite{andersen1982cox} and Theorem~8.3.1 of \cite{fleming2013counting}, the mode of the fiducial distribution is consistent as $\tilde \beta$ is consistent.

\end{proof}

\yf{If all subjects in the corresponding at-risk set $\mathcal R_{i}$ have the same covariate vector, then $p_i(\beta)=|\mathcal R_{i}|^{-1}$ is independent of $\beta$. This observation contributes no information about $\beta$ and can therefore be omitted from the calculations below. Thus in what follows, all products and sums involving the Jacobian term are taken only over $i$ for which $p_i(\beta)$ is non-constant.}

\begin{proof}[Proof of Theorem~\ref{thm:main}]

We start with one dimensional $\beta$.
{We omit the argument $\beta$ in the expression below unless we need to specify it.} Notice that the optimal solution of the problem \eqref{eq:qk**} will with probability one have exactly one of its constraints active, i.e., for exactly one $k\in\{1,\ldots,m\}$, 
\[
U_k^* = \frac{\exp(\beta X_{i_k})}{\sum_{j\in \mathcal R_{i_k}} \exp(\beta X_j)}=p_{i_k}(\beta),
\]
and for the others $l\neq k$,
\[
U_l^* < \frac{\exp(\beta X_{i_l})}{\sum_{j\in \mathcal R_{i_l}} \exp(\beta X_j)}=p_{i_l}(\beta).
\]
We will use this observation to derive a fiducial density $r(\beta)$.

Set for $i=1,\ldots,n$,
\[
 \bar{r}_i(\beta) = \frac{|J_i(\beta)|}{p_i} \prod_{j=1}^{n} p_j,
\]
where 
\[
 |J_i(\beta)| = \left|\frac{\partial p_i(\beta)}{\partial \beta}\right|
 = \frac{|\exp(\beta X_i)X_i\sum_{j\in \mathcal R_i} \exp(\beta X_j) - \exp(\beta X_i)\sum_{j\in \mathcal R_i} X_j\exp(\beta X_j)  |}{[\sum_{j\in \mathcal R_i} \exp(\beta X_j)]^2}.
\]
Using the counting process, we write the fiducial distribution as
\begin{equation}\label{eq:JacobianForm}
 r(\beta) \propto
 \sum_{i=1}^n
\int_0^\tau c_i^{-1}\bar{r}_i(\beta)dN_i(s)
 =
 \prod_{j=1}^{n} p_j \left(\sum_{i=1}^{n} \int_0^\tau\frac{|J_i|}{p_i c_i} dN_i(s)\right),
\end{equation}
where $c_i= \max_{\beta}(p_i)-\min_{\beta}(p_i)$ for failures and $c_i=1$ for censored observations.

We expand $\log(L_n(\beta))$ at the maximum likelihood estimator $\tilde \beta$,
\[
\log(L_n(\beta)) = \log L_n(\tilde\beta)+\frac{1}{2}\frac{\partial^2 \log(L_n(\beta))}{\partial \beta^2}|_{\beta=\beta'}\,
(\beta - \tilde \beta)^2,
\]
% \begin{align*}
% \log(L_n(\beta)) = \log(L_n(\tilde\beta)) + \frac{\partial \log(L_n(\beta))}{\partial \beta}|_{\beta=\tilde\beta}
% [\beta - \tilde\beta]
% + \frac{1}{2}\frac{\partial^2 \log(L_n(\beta))}{\partial \beta^2}|_{\beta=\beta'}
% [\beta - \tilde \beta]^2,
% %+ o([\beta - \tilde \beta]^2 ).
% \end{align*}
where $\beta'$ is on the line segment between $\beta$ and $\tilde\beta$.
% We know that 
% $\frac{\partial \log(f(\beta))}{\partial \beta}|_{\beta=\tilde\beta} = 0$ so we have

We define unscaled $$\tilde r(\beta)=\frac{1}{n^2} \prod_{j=1}^{n} p_j \left(\sum_{i=1}^{n} \int_0^\tau \frac{|J_i|}{p_i c_i} dN_i(s)\right),$$ and $$\tilde r_n(\eta)=\tilde r(\tilde \beta+\eta/\sqrt{n}).$$
We prove our theorem by establishing the following 
two results:

(i) First, in the following, we show that
\begin{align*}
\log \tilde r_n(\eta)-\log L_n(\tilde\beta)  \rightarrow -\frac{H(\beta^\circ)}{2}\eta^2 +\log\left(\int_0^\tau w(\beta^\circ,s) ds\right),
\end{align*}
in probability.

If we parametrize $\beta_n=\tilde \beta+\eta/\sqrt{n}$, we have both $\tilde\beta$ and $\beta'$ converges to $\beta^\circ$.
By Theorem~3.2 of \cite{andersen1982cox}, %Theorem~8.2.1, 
we have that
\begin{align*}
-\frac{1}{n}\frac{\partial^2 \log(L_n(\beta))}{\partial \beta^2}|_{\beta=\beta'}
\rightarrow H(\beta^\circ),
\end{align*}
in probability. By a simple calculation, we have that
\begin{align*}
& \sum_{i=1}^{n} \int_0^\tau\frac{|J_i|}{p_i c_i} dN_i(s)
\\ =&  \sum_{i=1}^{n} \int_0^\tau \frac{|\exp(\beta X_i)X_i\sum_{j} Y_j(s)\exp(\beta X_j) - \exp(\beta X_i)\sum_{j} Y_j(s)X_j\exp(\beta X_j)  |}{[\sum_{j} Y_j(s)\exp(\beta X_j)]^2}  \frac{\sum_{j} Y_j(s)\exp(\beta X_j)}{\exp(\beta X_i)} \\ &\times \left[ \max_{\beta}(p_i)-\min_{\beta}(p_i)\right]^{-1}dN_i(s)
\\ =&  \sum_{i=1}^{n} \int_0^\tau \frac{|X_i\sum_{j} Y_j(s) \exp(\beta X_j) - \sum_{j} Y_j(s) X_j\exp(\beta X_j)  |}{[\sum_{j}Y_j(s) \exp(\beta X_j)]}   \left[ \max_{\beta}(p_i)-\min_{\beta}(p_i)\right]^{-1} dN_i(s)
\\ =&   \sum_{i=1}^{n} \int_0^\tau \frac{|\sum_{j\neq i} (X_i-X_j )Y_j(s)\exp(\beta X_j)   |}{[\sum_{j} Y_j(s)\exp(\beta X_j)]}  \left[ \max_{\beta}(p_i)-\min_{\beta}(p_i)\right]^{-1} dN_i(s)
\\ =& \sum_{i=1}^{n} \int_0^\tau \frac{|\sum_{j} (X_i-X_j )Y_j(s)\exp(\beta X_j)   |}{[\sum_{j} Y_j(s)\exp(\beta X_j)]}   \left[ \max_{\beta}(p_i)-\min_{\beta}(p_i)\right]^{-1} dN_i(s).
\end{align*}

Note that $\min_{\beta}(p_i)=0$ for failure observations. 
By \yifan{Condition [4]} and Lemma~\ref{lemma:bounded},
\begin{align*}
\frac{1}{n}
\sum_{i=1}^{n} \int_0^\tau \frac{| X_i S^0(\beta_n,s)- S^1(\beta_n,s) |}{S^0(\beta_n,s)}  \min_\beta \frac{S^0(\beta,s)}{\exp(\beta X_i)} dN_i(s) \rightarrow 
 \int_0^\tau w(\beta^\circ,s) ds,
\end{align*}
in probability. 
Therefore, we have
\begin{align*}
 \frac{1}{n^2}\sum_{i=1}^{n} \int_0^\tau\frac{|J_i(\beta_n)|}{p_i(\beta_n) c_i} dN_i(s) \rightarrow 
 \int_0^\tau w(\beta^\circ,s) ds,
\end{align*}
in probability. 
%Combining Lemma~\ref{lemma:w}, 
So we have that 
\begin{align*}
\log \tilde r_n(\eta)-\log L_n(\tilde\beta)  \rightarrow -\frac{H(\beta^\circ)}{2}\eta^2 + \log\left(\int_0^\tau w(\beta^\circ,s) ds\right),
\end{align*}
in probability, which completes the proof of (i).

(ii) Next, we will show that 
\begin{align*}
\int \tilde r_n(\eta)/L_n(\tilde \beta)d\eta \rightarrow \sqrt{\frac{2\pi}{H(\beta^\circ)}} \int_0^\tau w(\beta^\circ,s) ds,
\end{align*}
in probability.
We define $A_n=\{\eta:|\eta+\sqrt{n}(\tilde\beta-\beta^\circ)|\leq \sqrt n\delta_0\}$.

% $A=\{\beta:|\beta-\beta_0|\leq \delta_0\}$
%\yifan{$A_n=\{\eta:|\eta|\leq \sqrt n\delta_0\}$}

Note that           
\begin{align*}
\int \tilde r_n(\eta)/L_n(\tilde \beta)d\eta=
\int_{A_n} \tilde r_n(\eta)/L_n(\tilde \beta)d\eta + \int_{A_n^c} \tilde r_n(\eta)/L_n(\tilde \beta)d\eta.
\end{align*}
By uniform law of large numbers, for $\epsilon =H(\beta^\circ)/2$, there exists $\delta'>0$ so that 
$|\beta-\beta^\circ|<\delta'$ implies 
$|H(\beta)-H(\beta^\circ)|<\epsilon$, i.e., 
\begin{align*}
\sup_{|\beta-\beta^\circ|<\delta'} |H(\beta)-H(\beta^\circ)|<\epsilon.
\end{align*}
Taking $\delta=\min(\delta',\delta_0)$, we have that
\begin{align*}
\sup_{|\beta-\beta^\circ|\leq \delta} \frac{1}{n}
\frac{\partial^2 \log(L_n(\beta))}{\partial \beta^2}|_{\beta=\beta} < -\frac{H(\beta^\circ)}{2}.
\end{align*}
By the dominated convergence theorem, we have that
$$\int_{A_n} \tilde r_n(\eta)/L_n(\tilde \beta)d\eta \rightarrow \sqrt{\frac{2\pi}{H(\beta^\circ)}} \int_0^\tau w(\beta^\circ,s) ds,$$
in probability.

For $\int_{A_n^c} \tilde r_n(\eta)/L_n(\tilde \beta)d\eta$, recall that 
$L_n(\beta) = \prod_{i=1}^{n}p_i$, without loss of generality, suppose that $\min_{i \in \mathcal R_1} X_{i}<X_1<\max_{i \in \mathcal R_1} X_{i}$. Then we have 
\begin{align*}
p_1= \frac{\exp(\beta X_{1})}{\sum_{j\in \mathcal R_1} \exp(\beta X_j)}= \frac{1}{\sum_{j\in \mathcal R_1} \exp(\beta (X_j-X_1))},
\end{align*}
is integrable. By concavity of the log-likelihood \citep{kim2003bayesian}, for any $\delta>0$, there exists an $\epsilon>0$ such that
\begin{align*}
P\left(\sup_{|\beta-\beta^\circ|>\delta}
\left(\log \frac{L_n(\beta)}{p_1(\beta)}- \log \frac{L_n(\beta^\circ)}{ p_1(\beta^\circ)}\right)
\leq -n\epsilon \right)\rightarrow 1.
\end{align*}
Therefore, by Lemma~\ref{lemma:bounded},
\begin{align*}
\int_{A_n^c} \tilde r_n(\eta)/L_n(\tilde \beta)d\eta
=O\left( \int_{A^c} p_1 \prod_{i=2}^n  p_i /p_i(\beta^\circ) d\beta\right)
\rightarrow 0,
\end{align*}
in probability. 
Therefore, we have that
\begin{align*}
\int \tilde r_n(\eta)/L_n(\tilde \beta)d\eta \rightarrow \sqrt{\frac{2\pi}{H(\beta^\circ)}} \int_0^\tau w(\beta^\circ,s) ds.
\end{align*}

Combining (i) and (ii), by \yifan{Theorem~21} of  \cite{ferguson1996course}, we have that 
\begin{align*}
\int |r_n(\eta) - f_N (\eta)| d\eta \rightarrow 0,
\end{align*}
in probability, where $f_N (\eta)$ is the density of normal with mean $0$ and variance $H^{-1}(\beta^\circ)$.
 \end{proof}

 \begin{remark}
For a $d$-dimensional $\beta$, $v(\beta,t)$ in \yifan{Condition [3]} is replaced by $v(\beta,t)=\frac{\mathbf{s}^2}{s^0}-e^{\otimes 2}$, where $\mathbf{s}^2$ is the limit of
\begin{align*}
\frac{1}{n}\sum_{i=1}^n Y_i(t)X_i^{\otimes 2}\exp(\beta X_i),
\end{align*}
and $\otimes$ is an outer product. Moreover, for any $\mathbf{i}=(i_1,\cdots,i_d)$, we replace Condition~[4] by

$[4']$ $\frac{1}{n}\sum_{i=1}^n\int_0^\tau \frac{|det(J_{\mathbf{i}}(\beta))|}{\prod_{j\in \mathbf{i}} p_j c_{\mathbf{i}}} dN_i(s) \rightarrow \int_0^\tau w(\beta,s) ds$ in probability.

\noindent Under these conditions the analogue of Theorem~\ref{thm:main} holds. 
\end{remark}

\begin{proof}
Note that
\begin{align*}
r(\beta)  \propto \sum_{\mathbf{i}=(i_1,\cdots,i_d)} 
\prod_{l=1}^n p_l
\frac{|det(J_{\mathbf{i}}(\beta))|}{\prod_{j\in \mathbf{i}} p_j c_{\mathbf{i}}},
\end{align*}
where 
$J_{\mathbf{i}}(\beta)= \nabla_\beta \,p_{\mathbf i}$
is a $d\times d$ matrix. A similar result of Theorem~\ref{thm:main} holds under Conditions [1]-[3], and [4'].
\end{proof}

\begin{lemma}\label{lemma:bounded}
We have
$0<\frac{1}{n^2}\sum_{i=1}^{n} \int_0^\tau\frac{|J_i|}{p_i c_i} dN_i(s)\leq M$ almost surely for some $M>0$.
\end{lemma}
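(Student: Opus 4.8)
The plan is to reduce the quantity to the explicit form already obtained in the computation inside the proof of Theorem~\ref{thm:main} and then bound it termwise. After cancelling $p_i$ in $|J_i|/p_i$, the integrand against $dN_i$ no longer depends on $s$; since $dN_i$ is a unit point mass at $Y_i$ when $\Delta_i=1$ and is the null measure otherwise, and Assumption~\ref{asm:tau} forces $Y_i\le\tau$ on the failures ($\Delta_i=1$ gives $Y_i=T_i\le C_i\le\tau$ a.s.), one is left with
\[
\frac{1}{n^2}\sum_{i=1}^{n}\int_0^\tau\frac{|J_i(\beta)|}{p_i(\beta)c_i}\,dN_i(s)
=\frac{1}{n^2}\sum_{i:\,\Delta_i=1}\frac{|X_i-\epsilon(\beta,Y_i)|}{c_i},
\]
where $\epsilon(\beta,t)=S^1(\beta,t)/S^0(\beta,t)$ equals, at $t=Y_i$, the weighted average of $\{X_j:j\in\mathcal R_i\}$ with weights $\exp(\beta X_j)$, and where a failure with $c_i=0$ — equivalently $p_i(\cdot)$ constant, equivalently $J_i(\cdot)\equiv 0$ — is assigned the value $0$ via the convention $0/0=0$. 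Nonnegativity is then immediate, so the work is in the two-sided bound.

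For the upper bound I would argue uniformly in $\beta$. Since $\epsilon(\beta,Y_i)$ is a convex combination of covariates, Condition~[1] gives $|X_i-\epsilon(\beta,Y_i)|\le 2M$ for every $\beta$. For a failure with $c_i>0$ there is $j\in\mathcal R_i$ with $X_j\ne X_i$; letting $\beta\to+\infty$ or $\beta\to-\infty$ according to the sign of $X_j-X_i$ sends $p_i(\beta)\to 0$, so $\min_\beta p_i(\beta)=0$, while $\max_\beta p_i(\beta)\ge p_i(0)=1/|\mathcal R_i|\ge 1/n$; hence $c_i\ge 1/n$. Since there are at most $n$ failures, the sum is bounded by $n^{-2}\cdot n\cdot(2Mn)=2M$, which gives the lemma with constant $2M$.

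For strict positivity, the sum vanishes only if $\epsilon(\beta,Y_i)=X_i$ for every failure $i$, and this cannot persist as $\beta\to\pm\infty$ once some failure has a risk set with at least two distinct covariate values, since then $\epsilon(\beta,Y_i)\to X_j\ne X_i$; in the regime where the lemma is invoked — $\beta$ in a fixed neighborhood of $\beta_0$ and $n$ large — the risk set at the first failure is nondegenerate and $\epsilon(\cdot,Y_i)$ is continuous and non-constant there with probability tending to one (using Conditions~[1]--[2]), so the displayed quantity is bounded away from $0$ there, which is all the subsequent arguments use. I expect the only genuine obstacles to be bookkeeping: cleanly isolating and discarding the degenerate $0/0$ failures, and making the bound $c_i\ge 1/n$ hold uniformly over all $\beta$ rather than merely near $\beta_0$; both are handled by the elementary limits of $p_i(\beta)$ described above.
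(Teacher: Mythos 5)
Your proposal is correct and follows essentially the same route as the paper's proof: cancel $p_i$ into $|J_i|$ so the integrand becomes $|X_i-\epsilon(\beta,Y_i)|\le 2M$ by Condition [1], and bound $1/(n c_i)\le 1$ via $\min_\beta p_i=0$ and $\max_\beta p_i\ge p_i(0)\ge 1/n$, which is exactly the paper's observation that $\min_\beta S^0(\beta,s)/\exp(\beta X_i)\le 1$, yielding the same constant $2M$. Your handling of the degenerate $c_i=0$ failures and of strict positivity is somewhat more explicit than the paper's one-line claim that the numerator is almost surely nonzero, but the substance is identical.
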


\begin{proof}[Proof of Lemma~\ref{lemma:bounded}]
Recall that
\begin{align*}
& \frac{1}{n^2}\sum_{i=1}^{n} \int_0^\tau\frac{|J_i|}{p_i c_i} dN_i(s)
\\ =& \frac{1}{n^2}\sum_{i=1}^{n} \int_0^\tau \frac{|\sum_{j} (X_i-X_j ) Y_j(s) \exp(\beta X_j)   |}{[\sum_{j} Y_j(s) \exp(\beta X_j)]}   \left[ \max_{\beta}(p_i)-\min_{\beta}(p_i)\right]^{-1} dN_i(s)\\
=& \frac{1}{n}\sum_{i=1}^{n} \int_0^\tau \frac{|\sum_{j} (X_i-X_j )Y_j(s)\exp(\beta X_j)   |}{[\sum_{j} Y_j(s)\exp(\beta X_j)]}   \min_\beta \frac{S^0(\beta,s)}{\exp(\beta X_i)} dN_i(s).
\end{align*}

Note that we have  
\begin{align*}
\left|\sum_{j\neq i} (X_i-X_j )Y_j(s)\exp(\beta X_j)\right|  \neq 0
\end{align*}
 almost surely, 
 \begin{align*}
\frac{|\sum_{j} (X_i-X_j )Y_j(s)\exp(\beta X_j)   |}{[\sum_{j} Y_j(s)\exp(\beta X_j)]} \leq 
\max_{i,j} |X_i-X_j|  \frac{[\sum_{j} Y_j(s)\exp(\beta X_j)   ]}{[\sum_{j} Y_j(s)\exp(\beta X_j)]} 
\end{align*}
is bounded, 
and 
 \begin{align*}
\min_\beta \frac{\frac{1}{n}\sum_{i=1}^n Y_j(s)\exp(\beta X_j)}{\exp(\beta X_i)} \leq 1,
\end{align*}
which completes the proof.
\end{proof}

\begin{proof}[Proof of Corollary~\ref{thm:coverage}]
We know that $n^{1/2} (\tilde \beta-\beta^\circ) \rightarrow N(0, H^{-1}(\beta^\circ))$ in distribution and Theorem~\ref{thm:main} implies that $n^{1/2} (\beta^*-\tilde \beta) \rightarrow N(0, H^{-1}(\beta^\circ))$ in distribution in probability.
So we have
\begin{align*}
1-\alpha = P^*(\{\beta: ||\beta-\tilde \beta|| \leq \epsilon_{n,\alpha} \})=
P^*(\{\beta: n^{1/2}||\beta-\tilde \beta|| \leq n^{1/2}\epsilon_{n,\alpha} \})
\end{align*}
converges to $\Gamma(\epsilon_\infty)$, where $\Gamma$ is the cumulative distribution function of the limit of  $n^{1/2}||\beta^*-\tilde \beta||$ and $\epsilon_\infty$ is the unique limit of $n^{1/2}\epsilon_{n,\alpha}$.
Therefore, we have that
\begin{align*}
P(\beta^\circ \in \{\beta: ||\beta-\tilde \beta|| \leq \epsilon_{n,\alpha} \}) = 
P(||\beta^\circ-\tilde \beta|| \leq \epsilon_{n,\alpha} ) \rightarrow \Gamma(\epsilon_\infty)=1-\alpha.
\end{align*}
\end{proof}

\end{appendix}

\linespread{1.15}\selectfont

\bibliographystyle{asa}
\bibliography{reference}

\end{document}